\newtheorem{definition}{Definition}
\begin{document}
\title{PrivDPR: Synthetic Graph Publishing with Deep PageRank under Differential Privacy}

\author{Sen Zhang}
\affiliation{%
 \institution{The Hong Kong Polytechnic University}
 \city{Hong Kong}
 \country{China}}
\email{senzhang@polyu.edu.hk}

\author{Haibo Hu}
\affiliation{%
 \institution{The Hong Kong Polytechnic University}
 \city{Hong Kong}
 \country{China}}
\email{haibo.hu@polyu.edu.hk}

\author{Qingqing Ye}
\authornote{Corresponding author}
\affiliation{%
 \institution{The Hong Kong Polytechnic University}
 \city{Hong Kong}
 \country{China}}
\email{qqing.ye@polyu.edu.hk}

\author{Jianliang Xu}
\affiliation{%
 \institution{Hong Kong Baptist University}
 \city{Hong Kong}
 \country{China}}
 \email{xujl@comp.hkbu.edu.hk}

\begin{abstract}
The objective of privacy-preserving synthetic graph publishing is to safeguard individuals' privacy while retaining the utility of original data. Most existing methods focus on graph neural networks under differential privacy (DP), and yet two fundamental problems in generating synthetic graphs remain open. First, the current research often encounters high sensitivity due to the intricate relationships between nodes in a graph. Second, DP is usually achieved through  advanced composition mechanisms that tend to converge prematurely when working with a small privacy budget. In this paper, inspired by the simplicity, effectiveness, and ease of analysis of PageRank, we design PrivDPR, a novel privacy-preserving deep PageRank for graph synthesis. In particular, we achieve DP by adding noise to the gradient for a specific weight during learning. 
Utilizing weight normalization as a bridge, we theoretically reveal that increasing the number of layers in PrivDPR can effectively mitigate the high sensitivity and privacy budget splitting. Through formal privacy analysis, we prove that the synthetic graph generated by PrivDPR satisfies node-level DP. Experiments on real-world graph datasets show that PrivDPR preserves high data utility across multiple graph structural properties.
\end{abstract} 


\begin{CCSXML}
<ccs2012>
   <concept>
       <concept_id>10002978.10003018.10003019</concept_id>
       <concept_desc>Security and privacy~Data anonymization and sanitization</concept_desc>
       <concept_significance>500</concept_significance>
       </concept>
 </ccs2012>
\end{CCSXML}

\ccsdesc[500]{Security and privacy~Data anonymization and sanitization}

\keywords{Differential Privacy; Graph Synthesis; PageRank}


\maketitle

\section{Introduction}\label{Intro}
Numerous real-world applications, such as social networks~\cite{leskovec2010signed}, email networks~\cite{leskovec2007graph}, and voting networks~\cite{mucha2010communities}, are empowered by graphs and graph analysis~\cite{sharma2019brief}. For instance, Facebook leverages social network analysis to offer friend recommendations based on the connections between different users~\cite{jiang2016big}. While the benefits are indisputable, direct publication of graph data potentially results in individual privacy being exposed by different types of privacy attacks~\cite{hay2011privacy}. Hence, it is crucial to sanitize graph data before making it publicly available.

Differential privacy (DP)~\cite{dwork2014algorithmic} is an extensively studied statistical privacy model thanks to its rigorous mathematical privacy framework. DP can be applied to graph data in two common ways~\cite{hay2009accurate}: edge-level DP and node-level DP. Edge-level DP considers two graphs as neighbors if they differ by a single edge, while node-level DP considers two graphs as neighbors if they differ by the edges connected to a single node. Satisfying node-level DP can be challenging as varying one node could result in the removal of $N-1$ edges in the worst case, where $N$ denotes the number of nodes. 
With the development of differentially private deep learning~\cite{abadi2016deep}, most existing methods~\cite{yang2020secure, daigavane2021node,zhang2024dpar,sajadmanesh2023gap,xiang2023preserving} focus on generating synthetic graph data by privatizing deep graph generation models. The advanced composition mechanisms, such as moments accountant (MA)~\cite{abadi2016deep}, are employed to address excessive splitting of privacy budget during optimization, ensuring that the focus remains on mitigating high sensitivity in graph data. 
For example, Yang {\it et al}.~\cite{yang2020secure} propose two solutions, namely differentially private GAN (DPGGAN) and differentially private VAE (DPGVAE), which address high sensitivity by enhancing MA. However, they only achieve weak edge-level DP.
Recently, a number of methods~\cite{daigavane2021node,zhang2024dpar,sajadmanesh2023gap} focus on Graph Neural Networks (GNNs) under node-level RDP, which often mitigate high sensitivity by bounded-degree strategies.  
Despite their success, these methods achieve DP based on advanced composition mechanisms and tend to converge prematurely, particularly when the privacy budget is small. This issue results in decreased performance in terms of both privacy and utility. 

Over the past two decades, PageRank~\cite{page1998pagerank} has been widely used in graph mining and learning to evaluate node rankings, thanks to its simplicity, effectiveness, and ease of analysis. 
In this paper, we propose a novel privacy-preserving deep PageRank approach for graph synthesis, namely PrivDPR. A naive method for ensuring privacy is to clip the gradients for all weights and then add noise to them. However, this method often suffers from high sensitivity due to complex node relationships, leading to poor utility. To tackle this issue, inspired by the fact that graph synthesis only requires a specific weight instead of all weights, we explore the relationship between the number of layers and sensitivity. Instead of directly reducing high sensitivity, our core idea is to use weight normalization as a bridge to theoretically demonstrate that increasing the number of layers effectively addresses the challenges associated with high sensitivity and privacy budget splitting. Moreover, through formal privacy analysis, we provide evidence that the synthetic graphs generated by PrivDPR satisfy node-level DP requirements. Extensive experiments on four real graph datasets highlight PrivDPR's ability to effectively preserve essential structural properties of the original graphs.

To summarize, this paper makes three main contributions:
\begin{itemize}
[leftmargin=5mm]
\item We present PrivDPR, a novel method for privately synthesizing graphs through the design of a deep PageRank. This method can preserve high data utility while ensuring $(\epsilon,\delta)$-node-level DP.
\item Instead of directly reducing high sensitivity, we utilize weight normalization as a bridge to explore the relationship between the number of layers and sensitivity. We reveal that increasing the number of layers can effectively address issues arising from high sensitivity and privacy budget splitting.
\item Extensive experiments on real graph datasets demonstrate that our solution outperforms existing state-of-the-art methods significantly across eight distinct graph utility metrics and two classical downstream tasks.
\end{itemize}

The remaining sections of this paper are organized as follows. In Section~\ref{Preliminary}, we provide the preliminaries of our proposed solution.
Section~\ref{ProDef} presents the problem definition and introduces existing solutions.
In Section~\ref{PrivPRM_Framwork}, we discuss the sanitization solution. The privacy and time complexity analysis are presented in Section~\ref{sec:PrivPRM_PrivCompAnalysis}.
The comprehensive experimental results are presented in Section~\ref{Experiment}. We review the related work in Section~\ref{Related_work}. Finally, we conclude this paper in Section~\ref{ConcluSection}. 
\section{Preliminaries}\label{Preliminary}
In this section, we will provide a concise review of the concepts of DP and PageRank. 
The mathematical notations used throughout this paper are summarized in Table~\ref{PrivPRM_SymbolTable}. 

\begin{table}[htb]\small
  \centering
  \begin{threeparttable}
      \caption{Common Symbols and Definitions}\label{PrivPRM_SymbolTable}
      \begin{tabular}{l|l}
        \hline   
        Symbol & Description \\
        \hline   
        {$\epsilon, \delta$} & {Differential privacy parameters} \\
        {$G, \widetilde{G}$} & {Original and synthetic graphs}  \\
        {$G, G^\prime$} & {Any two neighboring graph datasets} \\
        {$d_i^{in}, d_i^{out}$} & {In-degree and out-degree of node $i$} \\
        {$\mathcal{A}$} & {A randomized algorithm}                       \\
        {$V$, $E$} & {Set of nodes and edges of $G$} \\
        {$N$}  & {Number of nodes in $G$} \\
        {$\mathbf{x},\mathbf{y}$} & {Lowercase letters denote vectors}  \\
        {$\mathbf{X},\mathbf{Y}$} & {Bold capital letters denote matrices}  \\
        {$\|\mathbf{x}\|_2, \|\mathbf{X}\|_2$} & {$\ell_2$-norm and Spectral norm} \\
        {$r$} & {Dimension of low-dimensional vectors} \\
        {$\gamma$} & {Damping factor of PageRank model} \\
        \hline   
      \end{tabular}
  \end{threeparttable}
\end{table}

\subsection{Differential Privacy}
DP is the prevailing concept of privacy for algorithms on statistical databases. Informally, DP limits the change in output distribution of a mechanism when there is a slight change in its input. In graph data, the concept of neighboring databases is established using two graph datasets, denoted as $G$ and $G^\prime$. These datasets are considered neighbors if their dissimilarity is limited to at most one edge or node. 

\begin{definition}[Edge (Node)-Level DP~\cite{hay2009accurate}]\label{DP_def}
A graph analysis mechanism $\mathcal{A}$ achieves $(\epsilon, \delta)$-edge (node)-level DP, if for any pair of input graphs $G$ and $G^\prime$ that are neighbors (differ by at most one edge or node), and for all possible $O\subseteq Range(\mathcal{A})$, we have 
$\mathbb{P}[\mathcal{A}(G) \in O] \leq \exp(\epsilon) \cdot \mathbb{P}[\mathcal{A}(G^\prime) \in O] + \delta$.
\end{definition}

The concept of the neighboring dataset $G$, $G^\prime$ is categorized into two types. Specifically, if $G^\prime$ can be derived by replacing a single data instance in $G$, it is termed bounded DP~\cite{dwork2006calibrating}. If $G^\prime$ can be obtained by adding or removing a data sample from $G$, it is termed unbounded DP~\cite{dwork2006differential}.
The parameter $\epsilon$ is referred to as the privacy budget, which is utilized to tune the trade-off between privacy and utility in the algorithm. A smaller value of $\epsilon$ indicates a higher level of privacy protection. The parameter $\delta$ is informally considered as a failure probability and is typically selected to be very small.

Suppose that a function $f$ maps a graph $G$ to a $r$-dimensional output in $\mathbb{R}^r$. To create a differentially private mechanism from $f$, it is common practice to inject random noise into the output of $f$. The magnitude of this noise is determined by the sensitivity of $f$, defined as follows.
\begin{definition}[Sensitivity~\cite{dwork2006calibrating}]\label{SensDef}
Given a function $f: G\rightarrow\mathbb{R}^r$, for any neighboring datasets $G$ and $G^\prime$, the $\ell_2$-sensitivity of $f$ is defined as 
$\mathcal{S}_{f} = \max_{G, G^\prime}\|f(G) - f(G^\prime)\|_2$.
\end{definition}

\textbf{Gaussian mechanism.} By utilizing the $\ell_2$-sensitivity definition, we can formalize the Gaussian mechanism applied to $f$ as follows:
\begin{theorem}[Gaussian Mechanism~\cite{dwork2006calibrating}]\label{GauMech}
For any function $f: G\rightarrow\mathbb{R}^r$, the Gaussian mechanism is defined as
$\mathcal{A}(G) = f(G) + \mathcal{N}\left(\mathcal{S}_{f}^2\sigma^2\mathbf{I}\right)$,
where $\mathcal{N}\left(\mathcal{S}_{f}^2\sigma^2\mathbf{I}\right)$ represents a zero-mean Gaussian distribution with $\sigma = \frac{\sqrt{2\log(1.25/\delta)}}{\epsilon}$.
\end{theorem}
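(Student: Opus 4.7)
The plan is to prove that the mechanism $\mathcal{A}(G) = f(G) + Z$ with $Z \sim \mathcal{N}(0, \mathcal{S}_f^2 \sigma^2 \mathbf{I})$ satisfies $(\epsilon,\delta)$-DP for the stated $\sigma$, following the classical argument via the privacy loss random variable. Fix two neighboring graphs $G, G^\prime$ and set $\mu = f(G) - f(G^\prime)$, so by Definition~\ref{SensDef} we have $\|\mu\|_2 \leq \mathcal{S}_f$. My first move is to reduce from $\mathbb{R}^r$ to a one-dimensional problem: because the noise covariance $\mathcal{S}_f^2\sigma^2\mathbf{I}$ is spherical, a rotation aligning $\mu$ with the first coordinate axis leaves the noise distribution invariant, and the remaining $r-1$ coordinates cancel in the density ratio. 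So without loss of generality we analyze two Gaussians on $\mathbb{R}$ with common variance $\mathcal{S}_f^2 \sigma^2$ and means differing by some $\Delta \in [0, \mathcal{S}_f]$; it suffices to handle the worst case $\Delta = \mathcal{S}_f$.

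Next I would introduce the privacy loss random variable
\begin{equation*}
L(x) \;=\; \ln\frac{p_{\mathcal{A}(G)}(x)}{p_{\mathcal{A}(G^\prime)}(x)}.
\end{equation*}
A direct computation with Gaussian densities (completing the square) gives, in the reduced 1D form, an affine function of $x$ whose distribution under $x \sim \mathcal{A}(G)$ is itself Gaussian. The standard DP-via-tail-bound lemma then says that to prove $(\epsilon,\delta)$-DP it is enough to verify the single inequality
\begin{equation*}
\Pr_{x \sim \mathcal{A}(G)}\bigl[\,L(x) > \epsilon\,\bigr] \;\leq\; \delta,
\end{equation*}
since one can decompose any output set $O$ into $\{L \leq \epsilon\}$ (handled by $e^\epsilon$) and $\{L > \epsilon\}$ (absorbed into $\delta$).

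The remaining work is a Gaussian tail estimate: after substituting the computed form of $L$, the event $\{L > \epsilon\}$ translates into $\{x > \tfrac{\mathcal{S}_f}{2} + \tfrac{\sigma^2 \epsilon}{\mathcal{S}_f} \cdot \mathcal{S}_f\}$ (up to the precise scaling), so using $x \sim \mathcal{N}(\mathcal{S}_f, \mathcal{S}_f^2 \sigma^2)$ it becomes a tail probability of a standard normal at a threshold depending on $\sigma$ and $\epsilon$. Using the Mills ratio bound $\Pr[\mathcal{N}(0,1) > t] \leq \tfrac{1}{t\sqrt{2\pi}} e^{-t^2/2}$ and plugging in $\sigma = \sqrt{2\ln(1.25/\delta)}/\epsilon$ yields a bound of the form $\tfrac{c}{\sqrt{2\pi}\, t}\, e^{-t^2/2} \leq \delta$ for a suitable threshold $t$; the constant $1.25$ is precisely what makes this inequality tight.

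I expect the main obstacle to be this last tail calculation, because the familiar constant $\sqrt{2\ln(1.25/\delta)}$ is not completely transparent: one has to carefully separate the prefactor $1/(t\sqrt{2\pi})$ from the exponential and verify that it is majorized by an appropriate power of $\delta$ across the full range of interest (conventionally $\epsilon \in (0,1)$). The rotation reduction and the derivation of the Gaussian form of $L$ are routine, but checking that $1.25$ suffices (and not merely $e^{O(1)}$) requires a monotonicity argument on $t \mapsto \tfrac{1}{t}e^{-t^2/2}$ that is the most delicate ingredient of the proof.
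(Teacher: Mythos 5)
The paper does not prove this statement at all: Theorem~\ref{GauMech} is quoted as a known result and attributed to Dwork et al., so there is no in-paper argument to compare yours against. Your sketch is the canonical proof from Dwork and Roth's \emph{Algorithmic Foundations of Differential Privacy} (Appendix~A), and its skeleton is sound: the rotation reduction to one dimension is valid because the covariance is spherical and the off-axis coordinates cancel in the likelihood ratio; the privacy loss $L$ is indeed affine in $x$ and Gaussian under $\mathcal{A}(G)$; the decomposition of any event $O$ into $\{L\le\epsilon\}$ and $\{L>\epsilon\}$ correctly reduces $(\epsilon,\delta)$-DP to the single tail inequality $\Pr[L>\epsilon]\le\delta$; and the worst case $\Delta=\mathcal{S}_f$ is the right one since the tail threshold $\tfrac{\epsilon s}{\Delta}-\tfrac{\Delta}{2s}$ is decreasing in $\Delta$. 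Two remarks. First, your threshold expression has the sign of the $\mathcal{S}_f/2$ term and the centering of $x$ garbled (the event is $x>\epsilon\sigma^2\mathcal{S}_f-\mathcal{S}_f/2$ for $x\sim\mathcal{N}(0,\mathcal{S}_f^2\sigma^2)$, not $x>\mathcal{S}_f/2+\sigma^2\epsilon$ for $x$ centered at $\mathcal{S}_f$); you hedge this with ``up to the precise scaling,'' and it does not affect the plan, but it would need fixing in a full write-up. Second, you are right that the constant $1.25$ only works for $\epsilon\in(0,1)$ --- the theorem as stated in the paper silently drops this hypothesis, and your flagging of it is a point in your favor rather than a gap. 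Since the paper supplies no proof, there is nothing ``different'' about your route; it is simply the standard one, correctly outlined.
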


The concept of sensitivity implies that protecting privacy at the level of individual nodes is more challenging compared to the edge level. This is primarily because modifying a node typically has a considerably greater impact (higher sensitivity) than changing an edge. As a result, a significant amount of noise must be added to ensure privacy for individual nodes.

\textbf{Important properties.} Furthermore, DP possesses two important properties that play a significant role in the implementation of PrivDPR.

\begin{theorem}[Sequential Composition~\cite{dwork2014algorithmic}]\label{DP_compos}
If $\mathcal{A}_1$ ensures $(\epsilon_1,\delta_1)$-DP, and $\mathcal{A}_2$ ensures $(\epsilon_2,\delta_2)$-DP, then the composition $(\mathcal{A}_1\circ\mathcal{A}_2)$ guarantees $(\epsilon_1+\epsilon_2,\delta_1+\delta_2)$-DP.
\end{theorem}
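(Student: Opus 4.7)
The plan is to fix arbitrary neighboring graphs $G, G'$ and an arbitrary measurable event $O \subseteq \text{Range}(\mathcal{A}_1) \times \text{Range}(\mathcal{A}_2)$, and then derive
\[
\mathbb{P}[(\mathcal{A}_1(G), \mathcal{A}_2(G)) \in O] \leq e^{\epsilon_1+\epsilon_2}\,\mathbb{P}[(\mathcal{A}_1(G'), \mathcal{A}_2(G')) \in O] + \delta_1 + \delta_2,
\]
which is exactly $(\epsilon_1+\epsilon_2, \delta_1+\delta_2)$-DP for the joint release $(\mathcal{A}_1\circ\mathcal{A}_2)$. By independence of the two mechanisms' internal randomness, I would write the left-hand side as $\int \mathbb{P}[\mathcal{A}_1(G)=o_1]\,\mathbb{P}[\mathcal{A}_2(G) \in O_{o_1}]\,do_1$, where $O_{o_1} = \{o_2 : (o_1,o_2)\in O\}$, so that the two DP guarantees can be applied to the two factors separately.

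First I would isolate the $\delta_1$ slack by the standard \emph{bad-set} argument: define $B_1 \subseteq \text{Range}(\mathcal{A}_1)$ as the set of outputs at which the pointwise ratio $\mathbb{P}[\mathcal{A}_1(G)=o_1]/\mathbb{P}[\mathcal{A}_1(G')=o_1]$ exceeds $e^{\epsilon_1}$. Applying Definition~\ref{DP_def} to $B_1$ itself yields $\mathbb{P}[\mathcal{A}_1(G) \in B_1] \leq \delta_1$, so the contribution of the integral over $B_1$ is bounded by $\delta_1$. On the complement $B_1^c$, the pointwise inequality $\mathbb{P}[\mathcal{A}_1(G)=o_1] \leq e^{\epsilon_1}\mathbb{P}[\mathcal{A}_1(G')=o_1]$ holds by construction. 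Next, for each fixed $o_1 \in B_1^c$, I would apply $(\epsilon_2,\delta_2)$-DP of $\mathcal{A}_2$ to the slice $O_{o_1}$, replacing $\mathbb{P}[\mathcal{A}_2(G) \in O_{o_1}]$ with $e^{\epsilon_2}\mathbb{P}[\mathcal{A}_2(G') \in O_{o_1}] + \delta_2$. The additive $\delta_2$, when integrated against $\mathbb{P}[\mathcal{A}_1(G)=o_1]$, contributes at most $\delta_2$ since it is integrated against a sub-probability measure. Chaining the two bounds yields the desired slack $\delta_1+\delta_2$ and exponent $\epsilon_1+\epsilon_2$.

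The main obstacle is the bookkeeping of the $\delta$ terms: if the two DP inequalities are applied in the wrong order---for instance, invoking $\mathcal{A}_1$-DP on the full density before the $\mathcal{A}_2$ bound---then the $\delta_2$ produced by $\mathcal{A}_2$ ends up multiplied by $e^{\epsilon_1}$, giving the weaker slack $\delta_1 + e^{\epsilon_1}\delta_2$. The remedy, as outlined above, is to apply the $\mathcal{A}_2$-DP inequality \emph{before} multiplying by $e^{\epsilon_1}$, so that $\delta_2$ is integrated against the $G$-density rather than its $e^{\epsilon_1}$-inflated $G'$-counterpart. Everything else is routine algebraic manipulation.
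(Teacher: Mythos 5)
The paper does not actually prove this statement—it is quoted from Dwork and Roth with a citation—so the only thing to assess is your argument on its own terms. Its architecture (condition on the first output, handle the two $\delta$'s separately, apply $\mathcal{A}_2$'s guarantee slice by slice) is the standard one, and your remark about the order in which the two inequalities must be applied is a real issue, correctly diagnosed. However, there is a genuine gap at the $\delta_1$ step. Applying the event-level Definition~\ref{DP_def} to $B_1=\{o_1: \mathbb{P}[\mathcal{A}_1(G)=o_1]>e^{\epsilon_1}\mathbb{P}[\mathcal{A}_1(G')=o_1]\}$ only gives $\mathbb{P}[\mathcal{A}_1(G)\in B_1]\le e^{\epsilon_1}\mathbb{P}[\mathcal{A}_1(G')\in B_1]+\delta_1$, and nothing forces $\mathbb{P}[\mathcal{A}_1(G')\in B_1]$ to vanish, so you cannot conclude $\mathbb{P}[\mathcal{A}_1(G)\in B_1]\le\delta_1$. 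Concretely, take $\epsilon_1=0$, $\delta_1=0.1$, with $\mathbb{P}[\mathcal{A}_1(G)=1]=0.55$, $\mathbb{P}[\mathcal{A}_1(G)=2]=0.45$ and $\mathbb{P}[\mathcal{A}_1(G')=1]=0.45$, $\mathbb{P}[\mathcal{A}_1(G')=2]=0.55$: this pair satisfies $(0,0.1)$-DP, yet $B_1=\{1\}$ has probability $0.55$ under $G$. Event-level $(\epsilon,\delta)$-DP does not imply the pointwise likelihood-ratio bound outside a set of probability $\delta$ with the same parameters; this is precisely the subtlety that Lemma~3.17 of Dwork and Roth is designed to handle.

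The repair is to bound the \emph{excess mass} rather than the mass of $B_1$: taking $O=B_1$ in Definition~\ref{DP_def} yields $\int\bigl(\mathbb{P}[\mathcal{A}_1(G)=o_1]-e^{\epsilon_1}\mathbb{P}[\mathcal{A}_1(G')=o_1]\bigr)_+\,do_1\le\delta_1$. Accordingly, split the density as $p_G=\min(p_G,e^{\epsilon_1}p_{G'})+(p_G-e^{\epsilon_1}p_{G'})_+$ instead of splitting the integral by the indicator of $B_1$. The excess part contributes at most $\delta_1$ since $\mathbb{P}[\mathcal{A}_2(G)\in O_{o_1}]\le 1$; in the remaining term you apply $\mathcal{A}_2$'s guarantee first, and the resulting $\delta_2$ is integrated against $\min(p_G,e^{\epsilon_1}p_{G'})\le p_G$, a sub-probability density, so it contributes at most $\delta_2$; the leftover is bounded by $e^{\epsilon_1+\epsilon_2}\,\mathbb{P}[(\mathcal{A}_1(G'),\mathcal{A}_2(G'))\in O]$. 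This yields exactly $(\epsilon_1+\epsilon_2,\delta_1+\delta_2)$ and also disposes of your ordering concern automatically.
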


\begin{theorem}[Post-Processing~\cite{dwork2014algorithmic}]\label{PostProc_Theo}
If $\mathcal{A}$ is an algorithm that achieves $(\epsilon,\delta)$-DP, then the sequential composition $\mathcal{B}(\mathcal{A}(\cdot))$ with any other algorithm $\mathcal{B}$ that does not have direct or indirect access to the private database also satisfies $(\epsilon,\delta)$-DP.
\end{theorem}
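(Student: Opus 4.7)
The plan is to reduce the statement to the defining inequality of differential privacy by working with arbitrary measurable output events of $\mathcal{B}\circ\mathcal{A}$ and pulling them back through $\mathcal{B}$ to events in the output space of $\mathcal{A}$. Concretely, I would first handle the case where $\mathcal{B}$ is deterministic: fix any measurable $S \subseteq \mathrm{Range}(\mathcal{B}\circ\mathcal{A})$, define its pre-image $T = \{a \in \mathrm{Range}(\mathcal{A}) : \mathcal{B}(a) \in S\}$, and observe that $\mathbb{P}[\mathcal{B}(\mathcal{A}(G)) \in S] = \mathbb{P}[\mathcal{A}(G) \in T]$ for any graph $G$. Applying Definition~\ref{DP_def} to the event $T$ for neighboring graphs $G$ and $G^\prime$ immediately yields
\[
\mathbb{P}[\mathcal{B}(\mathcal{A}(G)) \in S] \le e^{\epsilon}\,\mathbb{P}[\mathcal{A}(G^\prime) \in T] + \delta = e^{\epsilon}\,\mathbb{P}[\mathcal{B}(\mathcal{A}(G^\prime)) \in S] + \delta,
\]
which is exactly the $(\epsilon,\delta)$-DP condition for $\mathcal{B}\circ\mathcal{A}$.

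Next I would extend to the case where $\mathcal{B}$ is randomized but uses only fresh coins independent of the private database. The natural approach is to condition on the internal randomness $R$ of $\mathcal{B}$: for each fixed realization $r$ of $R$, the mapping $a \mapsto \mathcal{B}(a; r)$ is deterministic and satisfies the previous step, giving
\[
\mathbb{P}[\mathcal{B}(\mathcal{A}(G); r) \in S] \le e^{\epsilon}\,\mathbb{P}[\mathcal{B}(\mathcal{A}(G^\prime); r) \in S] + \delta.
\]
Integrating this inequality against the distribution of $R$ (which is independent of $G$ by hypothesis, since $\mathcal{B}$ has no access to the private database) preserves the linear bound and yields the desired $(\epsilon,\delta)$-DP guarantee for the composition.

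The conceptual obstacle---really the only subtle point---is the assumption that $\mathcal{B}$'s randomness is independent of the private input; without it, the conditioning step above would fail because the per-$r$ probabilities could not be re-averaged cleanly. This is precisely why the theorem requires $\mathcal{B}$ to have no direct or indirect access to the database, and I would emphasize this assumption explicitly in the proof. The remaining calculations are routine manipulations of probabilities and an interchange of expectation with the DP inequality, so I do not anticipate any technical difficulty beyond being careful about measurability of the pre-image $T$, which follows from measurability of $\mathcal{B}$.
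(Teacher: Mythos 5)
The paper does not prove this statement at all: it is quoted verbatim as a standard preliminary from the cited reference (Dwork and Roth), so there is no in-paper argument to compare against. Your proof is correct and is essentially the canonical one from that reference --- reduce to deterministic post-processing via the pre-image $T=\mathcal{B}^{-1}(S)$, where the $(\epsilon,\delta)$ inequality transfers immediately, then handle randomized $\mathcal{B}$ by conditioning on its internal coins and averaging, which preserves the bound because $p\mapsto e^{\epsilon}p+\delta$ is affine and the coins are independent of the database by hypothesis. The only technicality you leave implicit is that a general randomized $\mathcal{B}$ can be represented as independent randomness followed by a deterministic map (i.e., as a convex combination of deterministic functions), which is the phrasing the cited source uses and is immediate in the discrete settings relevant here; this is not a gap worth worrying about.
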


\subsection{PageRank}\label{Prelim_PR}
The Internet and social networks can be seen as vast graph structures. 
PageRank~\cite{page1998pagerank} is a well-known algorithm used for analyzing the links in a graph, making it a representative method for graph link analysis. It operates as an unsupervised learning approach on graph data. The core concept of PageRank involves establishing a random walk model on a directed graph, which can be viewed as a first-order Markov chain. This model describes the behavior of a walker randomly visiting each node along the directed edges of the graph. By meeting certain conditions, the probability of visiting each node during an infinitely long random walk converges to a stationary distribution. At this point, the stationary probability assigned to each node represents its PageRank value, indicating its significance. PageRank is defined recursively, and its calculation is typically performed using an iterative algorithm. The formal definition of PageRank is as follows:
\begin{lemma}
Consider a graph $G$. The PageRank score of a node $j$, denoted as $PR_j$, represents the probability of reaching node $j$ through random walks. The value of $PR_j$ can be calculated by summing up the ranking scores of its direct predecessors $i$, weighted by the reciprocal of their out-degree $d_i^{out}$. Mathematically, we have:
    \begin{equation}\label{PR_lossFunc}
         PR_j = \gamma\left(\sum_{i \in P_j} \frac{PR_i}{d_i^{out}}\right)+\frac{1-\gamma}{N}, \quad j=1,2, \cdots, N
    \end{equation}
where $\gamma$ is a damping factor commonly set to 0.85, $P_j$ represents the set of direct predecessors of node $j$, and the second term ensures that $PR_j > 0$ for $j=1,2, \cdots, N$, with $\sum_{j=1}^N PR_j = 1$.
\end{lemma}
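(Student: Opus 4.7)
The plan is to derive the recursion by viewing PageRank as the stationary distribution of an explicitly constructed ergodic Markov chain and then reading off the componentwise fixed-point equation. The statement is really a characterization theorem for the random-walk-with-teleport process, so the work is in (i) writing down the transition kernel, (ii) asserting existence/uniqueness of a stationary distribution, and (iii) showing the fixed-point equation collapses to the claimed form.

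First, I would set up the random-walk model. At each step, a surfer sitting at node $i$ with $d_i^{out}>0$ either follows a uniformly chosen outgoing edge (probability $\gamma$) or teleports to a uniformly random node (probability $1-\gamma$). This defines a transition matrix $\mathbf{M}\in\mathbb{R}^{N\times N}$ with
$$\mathbf{M}_{ij} \;=\; \gamma\cdot\frac{\mathbf{1}[(i,j)\in E]}{d_i^{out}} + \frac{1-\gamma}{N}.$$
Dangling nodes ($d_i^{out}=0$) need a convention, e.g.\ replacing their outgoing row by the uniform distribution; this is a standard fix that keeps $\mathbf{M}$ row-stochastic and is worth flagging but not belabouring. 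Because the teleport term is strictly positive, every entry of $\mathbf{M}$ is positive, so $\mathbf{M}$ is irreducible and aperiodic.

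Second, I would invoke Perron--Frobenius (equivalently, the ergodic theorem for finite irreducible aperiodic Markov chains) to conclude that there exists a unique row vector $\boldsymbol{\pi}$ with $\boldsymbol{\pi}\mathbf{M}=\boldsymbol{\pi}$, $\pi_j>0$, and $\sum_j\pi_j=1$. I define $PR_j:=\pi_j$, which is precisely the long-run probability of finding the surfer at node $j$ and hence matches the informal description in the lemma. Writing the stationarity condition componentwise,
$$PR_j \;=\; \sum_{i=1}^{N}PR_i\,\mathbf{M}_{ij} \;=\; \gamma\sum_{i\in P_j}\frac{PR_i}{d_i^{out}} \;+\; \frac{1-\gamma}{N}\sum_{i=1}^{N}PR_i,$$
and the second sum is $1$ by the normalization of $\boldsymbol{\pi}$, yielding exactly Equation~\eqref{PR_lossFunc}. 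The positivity $PR_j\ge (1-\gamma)/N>0$ and the normalization $\sum_j PR_j=1$ follow immediately from the same stationary distribution and can also be verified directly by summing the recursion over $j$.

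The main obstacle I expect is the dangling-node issue: without the uniform-row convention the link portion of $\mathbf{M}$ fails to be stochastic, so the Perron--Frobenius step does not apply verbatim and the manipulation $\sum_i PR_i\,\mathbf{M}_{ij}$ does not telescope cleanly. Once that convention (or, alternatively, a preprocessing step that removes sinks) is fixed, the rest is routine linear algebra and the recursion is essentially mechanical. A minor secondary subtlety is justifying the equivalence between "probability of reaching node $j$ through random walks" (a visit-frequency statement) and the algebraic stationary distribution, which is precisely the content of the ergodic theorem and can be cited rather than re-proved.
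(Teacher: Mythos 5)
Your derivation is correct: the claimed recursion is exactly the componentwise stationarity condition $\boldsymbol{\pi}\mathbf{M}=\boldsymbol{\pi}$ for the teleporting random walk, the $\frac{1-\gamma}{N}$ term collapses correctly because $\sum_{i}PR_i=1$, and flagging the dangling-node convention is the right (and only nontrivial) caveat, since without it the link portion of $\mathbf{M}$ is not row-stochastic and the Perron--Frobenius step fails. Note that the paper itself supplies no proof of this lemma --- it is stated as a background recall of PageRank cited to Page et al. --- so there is no in-paper argument to compare against; yours is the canonical stationary-distribution derivation and fully justifies the statement, including positivity $PR_j\geq\frac{1-\gamma}{N}>0$ and normalization.
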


\section{Problem Definition and Existing Solutions}\label{ProDef}
\subsection{Problem Definition}
In this paper, we consider a directed and unweighted graph $G = (V, E)$, where $V$ represents the set of nodes and $E$ represents the set of edges. An undirected graph is a special case of the graph that we have defined and is included in our problem definition. Our primary objective is to address the following problem: \emph{given a graph $G$, how can we generate a synthetic graph $\widetilde{G}$ that possesses similar graph properties as the original graph $G$, while ensuring node-level DP.} 
 
\begin{definition}[Graph Synthesis under Bounded DP\footnote{In this paper, we achieve node-level privacy protection for graph synthesis under bounded DP. This implies that the synthesized graph can have the same number of nodes as the original graph.}]
A graph synthesis model $\mathscr{L}$ satisfies $(\epsilon, \delta)$-node-level DP if two neighboring graphs $G$ and $G^\prime$, which differ in only a node and its corresponding edges, satisfy the following condition 
for all possible $\widetilde{G}_s\subseteq{Range(\mathscr{L})}$:
\begin{equation*}
  \mathbb{P}(\mathscr{L}(G) \in \widetilde{G}_s) \leq \exp(\epsilon) \cdot \mathbb{P}\left(\mathscr{L}\left(G^\prime\right) \in \widetilde{G}_s\right) + \delta,
\end{equation*}
where $\widetilde{G}_s$ denotes the set comprising all possible $\widetilde{G}$.
\end{definition}

The generated synthetic graph can be utilized for various downstream graph analysis tasks without compromising privacy, thanks to the post-processing property of DP.
\begin{theorem}
Let $\mathscr{L}$ be an $(\epsilon,\delta)$-node-level private graph synthesis model, and $f$ is an arbitrary graph query whose input is a simple graph. Then, $f \circ \mathscr{L}$ satisfies $(\epsilon,\delta)$-node-level DP.
\end{theorem}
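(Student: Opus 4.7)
The plan is to invoke the post-processing property of DP (Theorem~\ref{PostProc_Theo}) applied to the node-level neighboring relation. The essential observation is that $f$ is specified to take a simple graph as input, so once the synthetic graph $\widetilde{G}=\mathscr{L}(G)$ has been produced, $f$ touches only $\widetilde{G}$ and never queries the original graph $G$; hence $f$ plays exactly the role of the external algorithm $\mathcal{B}$ in Theorem~\ref{PostProc_Theo}.

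Concretely, I would proceed as follows. First, fix any pair of graphs $G$ and $G^\prime$ that are node-level neighbors (differing in one node and its incident edges), and fix an arbitrary measurable set $O \subseteq \mathrm{Range}(f \circ \mathscr{L})$. Second, define the preimage $\widetilde{G}_s = \{\widetilde{G} \in \mathrm{Range}(\mathscr{L}) : f(\widetilde{G}) \in O\}$, so that by definition $\mathbb{P}[(f\circ\mathscr{L})(G) \in O] = \mathbb{P}[\mathscr{L}(G) \in \widetilde{G}_s]$ and analogously for $G^\prime$. Third, apply the node-level DP guarantee of $\mathscr{L}$ to the set $\widetilde{G}_s$, obtaining
\begin{equation*}
\mathbb{P}[\mathscr{L}(G) \in \widetilde{G}_s] \leq \exp(\epsilon)\cdot \mathbb{P}[\mathscr{L}(G^\prime) \in \widetilde{G}_s] + \delta.
\end{equation*}
Substituting the preimage identity back yields the desired inequality for $f\circ\mathscr{L}$ on $O$, and since $G, G^\prime, O$ were arbitrary the node-level DP guarantee transfers.

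There is essentially no hard step: the statement is a direct corollary of Theorem~\ref{PostProc_Theo}, with the only subtlety being to confirm that the post-processing theorem, originally stated in terms of neighboring databases, preserves whichever notion of neighboring is used, including node-level neighbors. This follows because the proof of post-processing is agnostic to the neighboring relation; it only uses the $(\epsilon,\delta)$ inequality on sets in the range of $\mathscr{L}$. If a randomized $f$ is to be accommodated, one simply conditions on its internal randomness and takes expectations, which preserves the inequality.
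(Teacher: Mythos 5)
Your proposal is correct and matches the paper's (implicit) argument: the paper proves this theorem simply by invoking the post-processing property (Theorem~\ref{PostProc_Theo}), noting that $f$ only sees the synthetic graph and never the private input, which is exactly the observation you make. Your write-up merely unfolds the standard preimage argument behind that property and correctly notes that it is agnostic to the choice of neighboring relation, so there is no substantive difference from the paper's route.
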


In this scenario, even though an attacker has knowledge of the differentially private protocol, encompassing the methods of data encoding and perturbation, they are unable to deduce the original information accurately.

\subsection{Existing Solutions}
Existing solutions for graph generation include two tracks: differentially private shallow graph models and differentially private deep graph learning models. As our focus is on the latter one, we will defer the introduction of private shallow models to Section~\ref{Related_work}. In traditional differentially private deep learning~\cite{abadi2016deep,fu2024dpsur}, the advanced
composition mechanisms (i.e., MA) are employed to address excessive splitting of privacy budget during optimization. DP optimizers for \emph{non-graph data} typically update on the summed gradient with Gaussian noise:
\begin{equation}\label{dpsgd_eq1}
  \mathbf{\tilde{g}} \leftarrow \frac{1}{B}\left(\sum_{i=1}^{B} Clip(\mathbf{g}\left(x_i\right))+\mathcal{N}\left(C^2 \sigma^2 \mathbf{I}\right)\right),
\end{equation}
where $\mathbf{g}\left(x_i\right)$ denotes the gradient for each example $x_i$, $Clip(\cdot)$ is the clipping function defined as $Clip(\mathbf{g}\left(x_i\right))=\mathbf{g}\left(x_i\right)/\max\left(1,\frac{\|\mathbf{g}\left(x_i\right)\|_2}{C}\right)$,
$C$ is the clipping threshold, and $B$ is the batch size. 
In graph learning, individual examples no longer compute their gradients independently because changing a single node or edge in the graph may affect all gradient values. To address this issue, three types of solutions have been proposed:
\begin{itemize}
[leftmargin=5mm]
\item \emph{Private GAN model.} 
DPGGAN~\cite{yang2020secure} is a differentially private GAN for graph synthesis. It improves the sensitivity by enhancing the MA. DPGGAN proves that the noised clipped gradient $\mathbf{\tilde{g}}$ applied as above guarantees that the learned graph generation model to be edge-level DP, with a different condition from that in Theorem~\ref{GauMech} due to the nature of graph generation.
\item \emph{Private VAE model.} 
DPGVAE~\cite{yang2020secure} is a differentially private VAE designed for graph synthesis. It achieves the same level of privacy as DPGGAN under the same conditions.
\item \emph{Private GNN model.} 
The features of one node can influence the gradients of other nodes in the network. The sensitivity for $\sum_{i=1}^{B} Clip(\mathbf{g}\left(x_i\right))$ may reach $BC$ under node-level DP. Several solutions~\cite{sajadmanesh2023gap,zhang2024dpar,daigavane2021node,olatunji2021releasing,xiang2023preserving} have been proposed. The classic and advanced approach is GAP~\cite{sajadmanesh2023gap}, which uses aggregation perturbation to achieve RDP and introduces a new GNN architecture tailored for private learning over graphs, resulting in improved trade-offs between privacy and accuracy.
\end{itemize}

\textbf{Limitations.} 
Despite the usefulness of these approaches, two limitations in generating synthetic graphs have yet to be solved:
1) DPGGAN and DPGVAE only achieve weak edge-level DP, and
2) MA and RDP typically require a sufficient privacy budget to estimate the privacy guarantee. Thus, these private models mentioned above tend to converge
prematurely with a small privacy budget. 
We explain the second issue in detail using Algorithm~\ref{DPSGD_Algo}.

In the deep learning with DP framework, described in the algorithm from Line~\ref{code:Accou_begin} to Line~\ref{code:Accou_end}, there is a potential issue of premature termination when working with a limited overall privacy budget, such as $\epsilon \leq 0.5$. This occurs because the privacy loss metric, $\delta_{Accountant}$, converges rapidly towards the desired privacy level $\delta$. As a result, the algorithm may stop prematurely before achieving the desired level of privacy. This premature termination poses a challenge for existing deep learning models that utilize this framework. To mitigate this issue and improve the utility of the trained models, it is often necessary to set a relatively large value for the privacy budget parameter, $\epsilon$. However, this approach introduces a trade-off between privacy and utility since a larger privacy budget carries a higher risk to data privacy.

\begin{algorithm}
\caption{Deep Learning with DP}\label{DPSGD_Algo}
\For{epochs = $1,\cdots,T$}
{
    Take a batch sample set with sampling probability $p$\;
    Apply clipping to per-sample gradients\;
    Add Gaussian noise to the sum of clipped gradients\;\label{code:Add_noise}
    Update weights by any optimizer on private gradients with learning rate $\eta$\;
    Compute $\delta_{Accountant}$ given the target $\epsilon$\;
    \If{$\delta_{Accountant}(\epsilon,\sigma,p,T)\geq\delta$}
    { \label{code:Accou_begin}
         Break\;
    } \label{code:Accou_end}

}
\end{algorithm}
\section{Our Proposal: PrivDPR}\label{PrivPRM_Framwork}
To tackle the limitations outlined in the previous section, we propose a node-level differentially private deep PageRank for graph synthesis, inspired by the simplicity, effectiveness, and ease of analysis of PageRank in Section~\ref{Prelim_PR}. First, we provide an overview of the approach. Next, we describe how we construct the deep PageRank and achieve gradient perturbation. Finally, we present the complete training algorithm. 

\subsection{Overview}
The workflow of PrivDPR is shown in Figure~\ref{PrivPRM_FWFig}, which consists three stages: deep PageRank, gradient perturbation, and graph reconstruction. 
\begin{itemize}
[leftmargin=5mm]
    \item \textbf{Deep PageRank.} We design a deep PageRank that serves as the foundation for analyzing the correlation between the number of layers, high sensitivity, and privacy budget splitting. (see Section~\ref{sec:Deep_PRM})
    \item \textbf{Gradient Perturbation.} We achieve private deep PageRank by gradient perturbation. Instead of directly reducing high sensitivity, we first reveal theoretically that we can preset a desired small sensitivity and achieve it by slightly increasing the number of layers. We then show that the theorem can be extended to resist privacy budget splitting. (see Section~\ref{sec:gra_perturb})
    \item \textbf{Graph Reconstruction.} We reconstruct the graph by examining the co-occurrence counts of nodes using the acquired representations during optimization. This process entails creating a transition count matrix, and deriving an edge probability matrix to produce a binary adjacency matrix that represents the synthesized graph. (see \textbf{Appendix~\ref{appendix:gra_syn}})
\end{itemize}

\begin{figure*}[htb]
  \centerline{\includegraphics[width=6.5in]{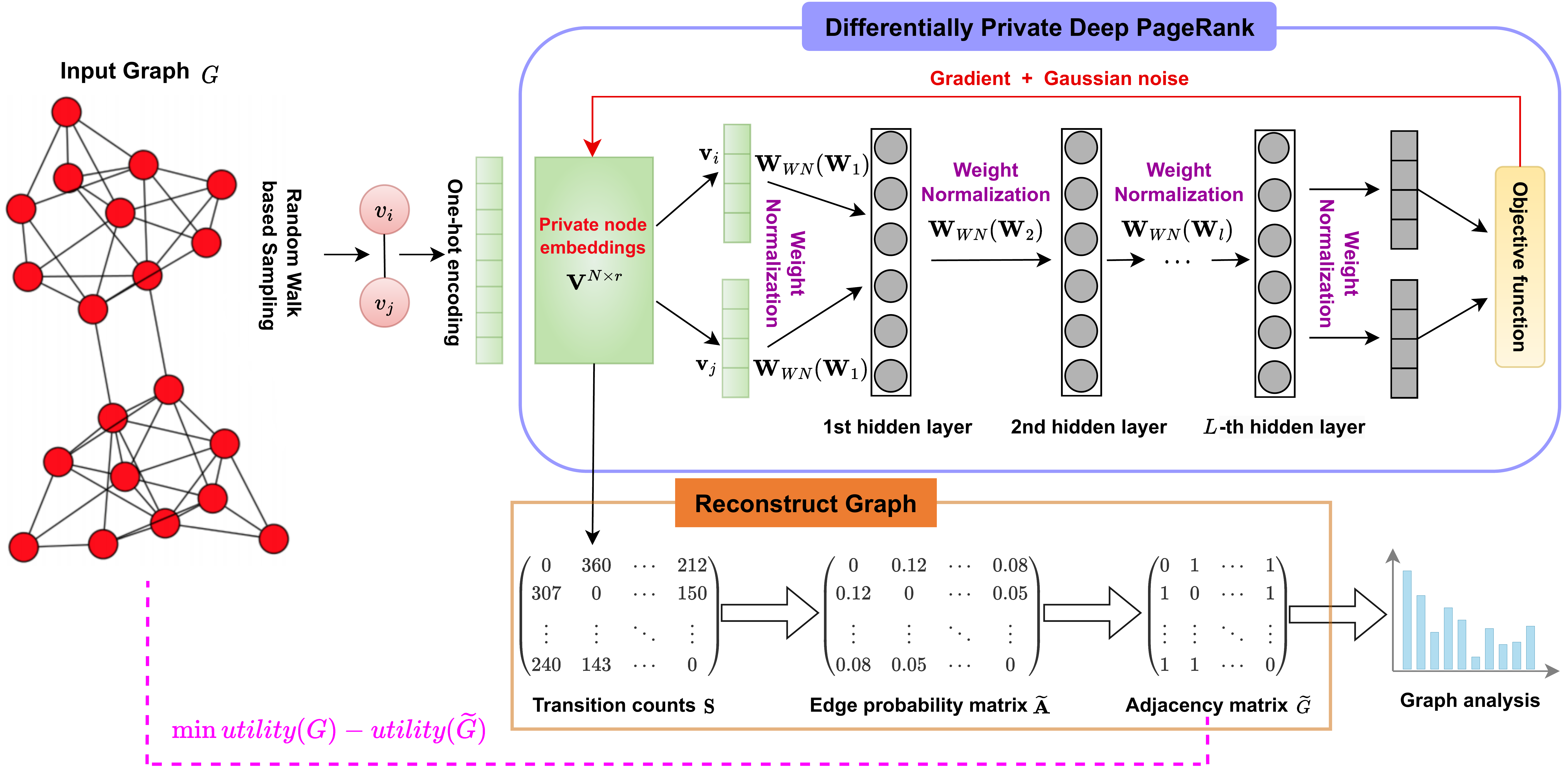}}
  \caption{Framework of our proposed PrivDPR}\label{PrivPRM_FWFig}
\end{figure*}

\subsection{Deep PageRank}\label{sec:Deep_PRM}
We fuse multiple layers into the PageRank, resulting in a modified form of Eq.~(\ref{PR_lossFunc}) given by:
\begin{equation}\label{DPRObjFunc}
    \begin{split}
        \min_{\boldsymbol{\Theta}} \mathscr{L}
        =
        \sum_{j \in V}\left(\gamma\left(\sum_{i \in P_j} \frac{f(v_i; \boldsymbol{\Theta})}{d_i^{out}}\right)+\frac{1-\gamma}{N} - f(v_j; \boldsymbol{\Theta})\right)^2,
    \end{split}
\end{equation}
where $f(v; \boldsymbol{\Theta})$ is a fully connected neural network in the following form:
\begin{equation}\label{NeuralNet_model}
\begin{split}
     f(v; \boldsymbol{\Theta})
    =\phi_{L+1}\left(\mathbf{W}_{L+1} \phi_L\left(\mathbf{W}_L\left(\phi_{L-1}\left(\mathbf{W}_{L-1}\left(\ldots \phi_1\left(\mathbf{V}\mathbf{W}_1\right) \ldots\right)\right)\right)\right)\right).
\end{split}
\end{equation}

Figure~\ref{PrivPRM_FWFig} shows the detailed architecture of deep PageRank (i.e., Eq.~(\ref{DPRObjFunc})), in which graph data is fed into the neural network through a virtual one-hot encoding of nodes as $\mathbf{V}$. The set of learning parameters is denoted as $\Theta = \{\mathbf{V},\mathbf{W}_1,\cdots,\mathbf{W}_l,\mathbf{W}_{L+1}\}$, with $\mathbf{V}$ in dimensions of $N\times r$, $\mathbf{W}_1$ in dimensions of $r\times d$, $\mathbf{W}_l$ in dimensions of $d\times d$, and $\mathbf{W}_{L+1}$ in dimensions of $d\times 1$. The activation function employed in each layer is denoted as $\phi$. For simplicity, the bias terms of each layer are omitted.

However, in the backpropagation of deep PageRank, applying stochastic gradient descent (SGD) to update $\mathscr{L}$ becomes infeasible,
since the squared loss term in Eq.~(\ref{DPRObjFunc}) involves a summation over all nodes $i$ pointing to node $j$, denoted as $\sum_{i \in P_j}$. This means that each squared loss term aggregates information from multiple links pointing to the same node $j$, which contradicts the standard SGD assumption (i.e., $\sum_{(i,j)\in E} L_{ij}$) and is thus non-decomposable. 
To address this challenge, we alternatively establish an upper bound for $\mathscr{L}$. 

\begin{lemma}\label{lemma_uppbound_objfunc}
By applying the Cauchy-Schwarz inequality to Eq.~(\ref{DPRObjFunc}), an upper bound of the objective function is:
\begin{equation*}
  \begin{split}
      \min_{\boldsymbol{\Theta}} \mathscr{L}
      &\leq \sum_{(i,j) \in E}d_j^{in}\gamma^2\left(\frac{f(v_i; \boldsymbol{\Theta})}{d_i^{out}}- \frac{f(v_j; \boldsymbol{\Theta})}{d_j^{in}\gamma}\right)^2 \\
      &+\sum_{(i,j) \in E}\left(\frac{f(v_i; \boldsymbol{\Theta})}{d_i^{out}}- \frac{f(v_j; \boldsymbol{\Theta})}{d_j^{in}\gamma}\right)\frac{2\gamma(1-\gamma)}{N} + \sum_{(i,j) \in E}\frac{(1-\gamma)^2}{d_j^{in}N^2}.
  \end{split}
\end{equation*}
\begin{proof}
Please refer to \textbf{Appendix~\ref{appendix:lemma_proof}}.
\end{proof}
\end{lemma}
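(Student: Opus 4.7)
The plan is to rewrite the residual inside each square in Eq.~(\ref{DPRObjFunc}) as a sum of $d_j^{in}$ quantities indexed by the predecessor set $P_j$, and then apply the Cauchy--Schwarz inequality in the form $\left(\sum_{k=1}^n x_k\right)^2 \leq n\sum_{k=1}^n x_k^2$ to move the square inside the sum.

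First, I would exploit the fact that $|P_j| = d_j^{in}$ to distribute the non-summed terms over the predecessors: write $f(v_j;\boldsymbol{\Theta}) = \sum_{i \in P_j} \frac{f(v_j;\boldsymbol{\Theta})}{d_j^{in}}$ and $\frac{1-\gamma}{N} = \sum_{i \in P_j} \frac{1-\gamma}{d_j^{in} N}$. This recasts the residual for node $j$ as
$$\gamma\sum_{i \in P_j}\frac{f(v_i;\boldsymbol{\Theta})}{d_i^{out}} + \frac{1-\gamma}{N} - f(v_j;\boldsymbol{\Theta}) = \sum_{i \in P_j}\left(\gamma\left(\frac{f(v_i;\boldsymbol{\Theta})}{d_i^{out}} - \frac{f(v_j;\boldsymbol{\Theta})}{d_j^{in}\gamma}\right) + \frac{1-\gamma}{d_j^{in} N}\right),$$
which is a sum of exactly $d_j^{in}$ terms, each attached to a directed edge $(i,j)\in E$.

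Next, I would apply Cauchy--Schwarz with $n = d_j^{in}$ to each squared residual. Expanding the square of the per-edge summand produces three contributions: a pure squared difference that combines with the prefactor $d_j^{in}$ to give $d_j^{in}\gamma^2\bigl(\tfrac{f(v_i;\boldsymbol{\Theta})}{d_i^{out}} - \tfrac{f(v_j;\boldsymbol{\Theta})}{d_j^{in}\gamma}\bigr)^2$, a cross term in which the prefactor $d_j^{in}$ cancels one copy of $d_j^{in}$ in the denominator and yields $\tfrac{2\gamma(1-\gamma)}{N}\bigl(\tfrac{f(v_i;\boldsymbol{\Theta})}{d_i^{out}} - \tfrac{f(v_j;\boldsymbol{\Theta})}{d_j^{in}\gamma}\bigr)$, and a constant contribution that reduces to $\tfrac{(1-\gamma)^2}{d_j^{in} N^2}$. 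Swapping the order of summation via $\sum_{j\in V}\sum_{i \in P_j} = \sum_{(i,j)\in E}$ then delivers the claimed bound verbatim.

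I do not anticipate a substantive mathematical obstacle; the only real work is bookkeeping, namely tracking where the $d_j^{in}$ factors land after the Cauchy--Schwarz step and making sure $\gamma$ is absorbed into the difference $\tfrac{f(v_i;\boldsymbol{\Theta})}{d_i^{out}} - \tfrac{f(v_j;\boldsymbol{\Theta})}{d_j^{in}\gamma}$ exactly as stated. It is worth noting that this decomposition is precisely what makes the upper bound decomposable over edges $(i,j)\in E$, which is the structural property needed so that SGD (and hence the per-example gradient perturbation of the subsequent sections) can be applied in the first place.
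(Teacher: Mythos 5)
Your proof is correct and follows essentially the same route as the paper's: both hinge on writing the residual for node $j$ as a sum of $|P_j|=d_j^{in}$ per-edge terms and invoking Cauchy--Schwarz in the form $(\sum_{k=1}^n x_k)^2\leq n\sum_{k=1}^n x_k^2$, arriving at the identical three-term bound over $(i,j)\in E$. The only cosmetic difference is ordering — the paper first expands the square to isolate the constant $\tfrac{1-\gamma}{\gamma N}$ and applies Cauchy--Schwarz only to the pure squared part (the cross and constant terms being exact rewrites), whereas you fold the constant into the per-edge summand before applying Cauchy--Schwarz; since the cross and constant contributions pass through unchanged either way, the resulting bound is the same.
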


According to Lemma~\ref{lemma_uppbound_objfunc}, the objective function for each edge $(i,j)$ is given by:
\begin{equation}\label{PRM_lossFunc}
\begin{split}
     &\mathcal{L}(v_i, v_j; \boldsymbol{\Theta})
    = d_j^{in}\gamma^2\left(\frac{f(v_i; \boldsymbol{\Theta})}{d_i^{out}}- \frac{f(v_j; \boldsymbol{\Theta})}{d_j^{in}\gamma}\right)^2 \\
     &+ \left(\frac{f(v_i; \boldsymbol{\Theta})}{d_i^{out}}- \frac{f(v_j; \boldsymbol{\Theta})}{d_j^{in}\gamma}\right)\frac{2\gamma(1-\gamma)}{N}
     + \frac{(1-\gamma)^2}{d_j^{in}N^2}.
\end{split}
\end{equation}

The proof of this upper bound on the approximation ratio is currently a subject for future research. However, it is important to note that the effectiveness of this upper bound has been demonstrated through experiments, which will be discussed and presented later on.

\subsection{Gradient Perturbation}\label{sec:gra_perturb}
\subsubsection{\textbf{How to Resist High Sensitivity?}}\label{resist_sens}
To yield a private embedding matrix $\mathbf{V}$, a naive method is to first clip $\frac{\partial{\mathcal{L}(v_i, v_j; \boldsymbol{\Theta})}}{\partial{\mathbf{V}}}$ and then inject noise into this gradient. This results in the following expression: 
\begin{equation}\label{dpsgd_eq2}
  \widetilde{\nabla}_{\mathbf{V}}\mathcal{L}
  \leftarrow \frac{1}{B}\left(\sum_{(v_i,v_j) \in E_B}Clip\left(\frac{\partial{\mathcal{L}(v_i, v_j; \boldsymbol{\Theta})}}{\partial{\mathbf{V}}}\right)+\mathcal{N}\left(\mathcal{S}_\nabla^2 \sigma^2 \mathbf{I}\right)\right),
\end{equation}
where the sensitivity of $\sum_{(v_i,v_j) \in E_B}Clip\left(\frac{\partial{\mathcal{L}(v_i, v_j; \boldsymbol{\Theta})}}{\partial{\mathbf{V}}}\right)$, denoted as $\mathcal{S}_\nabla$, can reach up to $BC$, as modifying one node could potentially impact all gradients in Eq.~(\ref{dpsgd_eq2}).  

As shown in Figure~\ref{PrivPRM_FWFig}, we design an optimizable matrix $\mathbf{V}$ as the input to the neural network $f(v; \boldsymbol{\Theta})$. Since our goal is to privatize $\mathbf{V}$ for graph synthesis, we only need to add noise to the gradient of $\mathbf{V}$, rather than to the gradients of all the weights. Inspired by this, we explore the use of weight normalization in $f(v; \boldsymbol{\Theta})$, which enables us to naturally bound the gradient of $\mathbf{V}$ and further reveal the relationships among the model's parameters.
Using weight normalization (see Figure~\ref{PrivPRM_FWFig}), not $Clip(\cdot)$, as a bridge to bound $\frac{\partial{\mathcal{L}(v_i, v_j; \boldsymbol{\Theta})}}{\partial{\mathbf{V}}}$, we can rewrite Eq.~(\ref{dpsgd_eq2}) as follows:
\begin{equation}\label{dpsgd_eq3}
\widetilde{\nabla}_{\mathbf{V}}\mathcal{L} \leftarrow \frac{1}{B}\left(\sum_{(v_i,v_j) \in E_B} \frac{\partial{\mathcal{L}(v_i, v_j; \boldsymbol{\Theta})}}{\partial{\mathbf{V}}}+\mathcal{N}\left(\mathcal{S}_\nabla^2 \sigma^2 \mathbf{I}\right)\right).
\end{equation}

Theorem~\ref{PrivPRM_MainTheo} reveals that we can preset a desired small $\mathcal{S}_\nabla$ for $\sum_{(v_i,v_j) \in E_B}\frac{\partial{\mathcal{L}(v_i, v_j; \boldsymbol{\Theta})}}{\partial{\mathbf{V}}}$ and achieve it by slightly increasing the number of layers.

\begin{theorem}\label{PrivPRM_MainTheo}
Given $\Theta = \{\mathbf{V},\mathbf{W}_l\}_{l=1}^{L+1}$ with weight normalization using $\mathbf{W}_{WN}(\mathbf{W}) = \frac{\mathbf{W}}{\|s\mathbf{W}\|_2}$ where $s>1$, we have 
$\left\|\frac{\partial{\mathcal{L}}(v_i, v_j; \boldsymbol{\Theta})}{\partial{\mathbf{V}}}\right\|_2 \leq M\left(\frac{1}{s}\right)^{L+1}$, 
in which $M = \left(2(N-1)\gamma^2 + 2\gamma + \frac{2\gamma(1-\gamma)}{N}\right)\left(1 + \frac{1}{\gamma}\right)$. By presetting a sensitivity $\mathcal{S}_\nabla$, we can determine the maximum number of layers $L$ using $\log_\frac{1}{s}^\frac{\mathcal{S}_\nabla}{BM} - 1 \leq L$ under node-level DP.
\end{theorem}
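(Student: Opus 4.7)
The plan is to reduce everything to the chain rule, since $\mathbf{V}$ enters $\mathcal{L}$ in Eq.~(\ref{PRM_lossFunc}) only through $f(v_i;\boldsymbol{\Theta})$ and $f(v_j;\boldsymbol{\Theta})$. I would write
$\|\partial\mathcal{L}/\partial\mathbf{V}\|_2 \leq |\partial\mathcal{L}/\partial f(v_i)|\cdot\|\partial f(v_i)/\partial\mathbf{V}\|_2 + |\partial\mathcal{L}/\partial f(v_j)|\cdot\|\partial f(v_j)/\partial\mathbf{V}\|_2$, then bound the inner network Jacobian and the two scalar outer derivatives separately, combining with submultiplicativity of the spectral norm and the triangle inequality. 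The product structure is precisely what will convert the $L+1$ layers into an exponentially small factor, while the outer derivative will deliver the constant $M$.

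For the network Jacobian, I would unroll the $L+1$ backward steps through the architecture in Eq.~(\ref{NeuralNet_model}). Weight normalization $\mathbf{W}_{WN}(\mathbf{W}) = \mathbf{W}/\|s\mathbf{W}\|_2$ forces each layer's effective weight matrix to have spectral norm exactly $1/s$, and under the standard assumption that each $\phi_l$ is $1$-Lipschitz (its Jacobian has spectral norm at most $1$), every backward step multiplies by at most $1/s$. Telescoping across all $L+1$ matrices yields $\|\partial f(v)/\partial\mathbf{V}\|_2 \leq (1/s)^{L+1}$.

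For the outer derivative, I would differentiate Eq.~(\ref{PRM_lossFunc}) directly. With the shorthand $a = f(v_i)/d_i^{out}$ and $b = f(v_j)/(d_j^{in}\gamma)$, the derivative with respect to $f(v_i)$ equals $[2d_j^{in}\gamma^2(a-b) + 2\gamma(1-\gamma)/N]/d_i^{out}$, with a symmetric expression carrying $-1/(d_j^{in}\gamma)$ for $f(v_j)$. Splitting the bracketed factor into three non-negative summands via the triangle inequality and bounding with $d_j^{in}\leq N-1$, $d_i^{out},d_j^{in}\geq 1$ (because $(i,j)\in E$), and $|f(v)|\leq 1$ (which holds when $\phi_{L+1}$ is saturating, e.g., tanh or sigmoid), I get $2(N-1)\gamma^2$ from the $d_j^{in}\gamma^2 a$ piece, $2\gamma$ from the $d_j^{in}\gamma^2 b$ piece (the $d_j^{in}$ cancels the denominator of $b$), and $2\gamma(1-\gamma)/N$ from the additive constant. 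Summing the two chain-rule terms contributes an extra factor $1/d_i^{out} + 1/(d_j^{in}\gamma) \leq 1 + 1/\gamma$, producing exactly the prefactor $M$ stated in the theorem.

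For the $L$-selection rule, I would observe that under node-level DP, modifying a single node affects at most the $B$ edges sampled in the batch, so the triangle inequality pushes the aggregate sensitivity up to $\mathcal{S}_\nabla \leq BM(1/s)^{L+1}$. Forcing this to stay below the preset target and rearranging, $(L+1)\log(1/s) \leq \log(\mathcal{S}_\nabla/(BM))$, which is equivalent to $L \geq \log_{1/s}(\mathcal{S}_\nabla/(BM)) - 1$ (both logarithms are negative since $s>1$, so the inequality survives division). I expect the main technical obstacle to be rigorously securing the hidden bound $|f(v)|\leq 1$ on which $M$ implicitly rests: without an explicit bounded final activation or a row-norm bound on $\mathbf{V}$, $M$ would inherit a dependence on the network's output magnitudes instead of being purely a function of $N$ and $\gamma$.
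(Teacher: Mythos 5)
Your proposal is correct and follows essentially the same route as the paper: the paper's Lemma~\ref{lemma_NNObj_On_V} is your telescoping Lipschitz/spectral-norm bound on the network Jacobian, its Lemma~\ref{lemma_BatchObjFuncGra_On_V} is your outer-derivative computation yielding the same constant $M$ (relying, as you note, on $|f(v;\boldsymbol{\Theta})|\leq 1$ from the sigmoid output), and the final step likewise takes the worst case of all $B$ batch gradients being affected before rearranging the logarithm. Your closing caveat about the hidden bounded-output assumption is a fair observation, but it matches the assumption the paper itself invokes rather than a gap in your argument relative to theirs.
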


To prove Theorem~\ref{PrivPRM_MainTheo}, we will need the following lemmas.

\begin{lemma}\label{lemma_NNObj_On_V}
For activation functions with bounded derivatives, such as Sigmoid, the upper bound of $\frac{\partial f(v; \boldsymbol{\Theta})}{\partial{\mathbf{V}}}$ is given by:
\begin{equation}\label{NeuralNetGra_On_V}
    \begin{split}
         \left\|\frac{\partial f(v; \boldsymbol{\Theta})}{\partial\mathbf{V}}\right\|_2
         \leq \prod_{l=1}^{L+1}  \left\|\mathbf{W}_l\right\|_2,
    \end{split}
\end{equation}
where $\frac{\partial f(v; \boldsymbol{\Theta})}{\partial{\mathbf{V}}}=\frac{\partial f(v; \boldsymbol{\Theta})}{\partial{\mathbf{v}}}$ holds because the input layer uses a one-hot-encoded vector and $\mathbf{v}$ denotes a vector from $\mathbf{V}$.
\begin{proof}
Please refer to \textbf{Appendix~\ref{appendix:lemma2_proof}}.
\end{proof}
\end{lemma}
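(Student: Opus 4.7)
The plan is to apply the chain rule iteratively through the $L+1$ layers of the network and then invoke submultiplicativity of the spectral norm together with the bound $|\phi'|\leq 1$ coming from the bounded-derivative assumption. A preliminary observation simplifies the problem considerably: because the input $v$ is a one-hot encoding, only the row $\mathbf{v}$ of $\mathbf{V}$ selected by that one-hot index influences $f(v;\boldsymbol{\Theta})$. Consequently $\frac{\partial f}{\partial \mathbf{V}}$ is an $N\times r$ matrix whose only nonzero row equals $\frac{\partial f}{\partial \mathbf{v}}$, so its spectral norm coincides with the $\ell_2$-norm of $\frac{\partial f}{\partial \mathbf{v}}$. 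This is exactly the identification stated after the lemma, and it reduces the task to bounding the gradient of a scalar output with respect to an $r$-dimensional vector.

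Next, I would set up the chain rule. Let $\mathbf{h}_0=\mathbf{v}$, and for $l=1,\ldots,L+1$ let $\mathbf{z}_l = \mathbf{W}_l\mathbf{h}_{l-1}$ and $\mathbf{h}_l=\phi_l(\mathbf{z}_l)$, with $f=\mathbf{h}_{L+1}$ (the final layer being scalar). Differentiation yields a Jacobian that factors as the product
$$\frac{\partial f}{\partial \mathbf{v}} \;=\; D_{L+1}\,\mathbf{W}_{L+1}\,D_L\,\mathbf{W}_L\,\cdots\,D_1\,\mathbf{W}_1,$$
where $D_l=\mathrm{diag}\!\bigl(\phi_l'(\mathbf{z}_l)\bigr)$ collects the componentwise activation derivatives (up to transposes whose spectral norms coincide with those of the original matrices). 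Applying submultiplicativity $\|AB\|_2\leq\|A\|_2\|B\|_2$ repeatedly, and noting that $\|D_l\|_2=\max_k|\phi_l'(z_{l,k})|\leq 1$ under the bounded-derivative hypothesis, gives
$$\left\|\frac{\partial f(v;\boldsymbol{\Theta})}{\partial \mathbf{v}}\right\|_2 \;\leq\; \prod_{l=1}^{L+1}\|\mathbf{W}_l\|_2\cdot\|D_l\|_2 \;\leq\; \prod_{l=1}^{L+1}\|\mathbf{W}_l\|_2,$$
which combined with the one-hot reduction is exactly the claim.

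The main obstacle I expect is purely bookkeeping: getting the orientation of the Jacobians and their transposes right so that submultiplicativity yields the clean product $\prod_l\|\mathbf{W}_l\|_2$ without stray transposed factors (which is legitimate since $\|\mathbf{W}\|_2=\|\mathbf{W}^{\!\top}\|_2$, but the manipulation still needs to be spelled out carefully). A secondary subtlety is the activation-derivative constant: Sigmoid actually satisfies $|\phi'|\leq 1/4$, so strictly speaking the bound could be tightened by a factor $(1/4)^{L+1}$. Since the downstream Theorem~\ref{PrivPRM_MainTheo} absorbs all constants into the single symbol $M$, stating the lemma with the constant $1$ is harmless, and I would simply remark that any uniform bound on $|\phi_l'|$ could be propagated through the same argument.
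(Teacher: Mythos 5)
Your proposal is correct and follows essentially the same route as the paper: the paper phrases the argument via Lipschitz constants of the composed layer maps (using $\|f_1\circ f_2\|_{\mathrm{Lip}}\leq\|f_1\|_{\mathrm{Lip}}\|f_2\|_{\mathrm{Lip}}$, $\|\phi_l\|_{\mathrm{Lip}}\leq 1$, and $\|\mathbf{W}_l\cdot\|_{\mathrm{Lip}}=\|\mathbf{W}_l\|_2$), which is the integrated form of your chain-rule Jacobian factorization with submultiplicativity of the spectral norm. Your added remarks on the one-hot reduction and on the slack from $|\phi'|\leq 1/4$ for Sigmoid are consistent with, and slightly more explicit than, the paper's treatment.
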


Next, we give the upper bound of $\frac{\partial{\mathcal{L}(v_i, v_j; \boldsymbol{\Theta})}}{\partial{\mathbf{V}}}$.

\begin{lemma}\label{lemma_BatchObjFuncGra_On_V}
The upper bound of $\frac{\partial{\mathcal{L}(v_i, v_j; \boldsymbol{\Theta})}}{\partial{\mathbf{V}}}$ is
\begin{equation}\label{BatchObjFuncGra_On_V}
    \begin{split}
        \left\|\frac{\partial{\mathcal{L}(v_i, v_j; \boldsymbol{\Theta})}}{\partial{\mathbf{V}}}\right\|_2
      \leq M\prod_{l=1}^{L+1}\left\|\mathbf{W}_l\right\|_2,
    \end{split}
\end{equation}
where $M = \left(2(N-1)\gamma^2 + 2\gamma + \frac{2\gamma(1-\gamma)}{N}\right)\left(1 + \frac{1}{\gamma}\right)$.
\begin{proof} 
Please refer to \textbf{Appendix~\ref{appendix:lemma3_proof}}.
\end{proof}
\end{lemma}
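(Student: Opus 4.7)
The plan is to apply the chain rule layer-by-layer to the feed-forward network $f(v;\boldsymbol{\Theta})$, then control each layer's Jacobian in spectral norm using submultiplicativity together with the bounded-derivative assumption on the activations.

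First, I would justify the identity $\partial f/\partial\mathbf{V}=\partial f/\partial\mathbf{v}$ stated in the lemma. Because $v$ is a one-hot indicator of a single node, only one row of $\mathbf{V}$, namely $\mathbf{v}$, actually influences the forward pass; all gradients with respect to other rows of $\mathbf{V}$ vanish identically. Hence the spectral norm of the full Jacobian $\partial f/\partial\mathbf{V}$ equals that of the single-row Jacobian $\partial f/\partial\mathbf{v}$, and it suffices to bound the latter. From here on I treat the network as a map $\mathbf{v}\mapsto f$ with hidden pre-activations $\mathbf{z}_l=\mathbf{W}_l\mathbf{h}_{l-1}$ and post-activations $\mathbf{h}_l=\phi_l(\mathbf{z}_l)$, with $\mathbf{h}_0=\mathbf{v}$ and $f=\mathbf{h}_{L+1}$.

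Next, I would unfold the chain rule. Applying it recursively gives
\begin{equation*}
\frac{\partial f}{\partial\mathbf{v}}
=\prod_{l=L+1}^{1}\mathrm{diag}\bigl(\phi_l'(\mathbf{z}_l)\bigr)\,\mathbf{W}_l,
\end{equation*}
where the product is taken in descending order of $l$. Taking spectral norms and using submultiplicativity ($\|AB\|_2\le\|A\|_2\|B\|_2$) yields
\begin{equation*}
\left\|\frac{\partial f}{\partial\mathbf{v}}\right\|_2
\le\prod_{l=1}^{L+1}\bigl\|\mathrm{diag}(\phi_l'(\mathbf{z}_l))\bigr\|_2\cdot\|\mathbf{W}_l\|_2.
\end{equation*}
Now I would invoke the hypothesis that $\phi_l$ has bounded derivative with $|\phi_l'(\cdot)|\le 1$ (as holds, e.g., for Sigmoid, where $\phi'\le 1/4\le 1$, and for tanh). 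Since the spectral norm of a diagonal matrix equals its largest absolute entry, each $\|\mathrm{diag}(\phi_l'(\mathbf{z}_l))\|_2\le 1$, and the inequality collapses to $\prod_{l=1}^{L+1}\|\mathbf{W}_l\|_2$, which is the claim.

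The only real subtlety is being consistent with the row/column convention that matches the formula $\phi_1(\mathbf{V}\mathbf{W}_1)$ in Eq.~(\ref{NeuralNet_model}); once the forward pass is rewritten as the column-vector recursion above, the remaining steps are just bookkeeping. I do not expect a hard obstacle in this proof: the chain rule, submultiplicativity of $\|\cdot\|_2$, and the derivative bound are all immediate, and the one-hot reduction eliminates the only apparent complication (the discrepancy between $\mathbf{V}$ and $\mathbf{v}$).
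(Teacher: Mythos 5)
Your proposal does not prove the stated lemma. What you have written is a proof of Lemma~\ref{lemma_NNObj_On_V} (the bound $\|\partial f/\partial\mathbf{V}\|_2\le\prod_{l=1}^{L+1}\|\mathbf{W}_l\|_2$ on the network Jacobian), and indeed your own header says so. That result is only an \emph{ingredient} of Lemma~\ref{lemma_BatchObjFuncGra_On_V}, which asks you to bound the gradient of the per-edge loss $\mathcal{L}(v_i,v_j;\boldsymbol{\Theta})$ from Eq.~(\ref{PRM_lossFunc}) and, crucially, to derive the explicit constant $M=\left(2(N-1)\gamma^2+2\gamma+\frac{2\gamma(1-\gamma)}{N}\right)\left(1+\frac{1}{\gamma}\right)$. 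Nothing in your argument touches the loss function, the damping factor $\gamma$, the degrees $d_i^{out},d_j^{in}$, or the number of nodes $N$, so the constant $M$ cannot emerge from it.

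The missing content is the following chain, which is what the paper's Appendix~\ref{appendix:lemma3_proof} actually does. Differentiating Eq.~(\ref{PRM_lossFunc}) gives a product of an outer scalar factor $\left(\frac{2d_j^{in}\gamma^2 f(v_i;\boldsymbol{\Theta})}{d_i^{out}}-2\gamma f(v_j;\boldsymbol{\Theta})+\frac{2\gamma(1-\gamma)}{N}\right)$ and an inner difference of Jacobians $\left(\frac{\partial f(v_i;\boldsymbol{\Theta})/\partial\mathbf{V}}{d_i^{out}}-\frac{\partial f(v_j;\boldsymbol{\Theta})/\partial\mathbf{V}}{d_j^{in}\gamma}\right)$. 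The outer factor is bounded by $\frac{2d_j^{in}\gamma^2}{d_i^{out}}+2\gamma+\frac{2\gamma(1-\gamma)}{N}$ using the fact that the Sigmoid output of $f$ lies in $(0,1)$; the inner factor is bounded by $\left(\frac{1}{d_i^{out}}+\frac{1}{d_j^{in}\gamma}\right)\prod_{l=1}^{L+1}\|\mathbf{W}_l\|_2$ via the triangle inequality and Lemma~\ref{lemma_NNObj_On_V}; finally the degree bounds $1\le d_i^{out}$, $1\le d_j^{in}\le N-1$ yield $M$. You would need to supply all of these steps (and note that the boundedness of $f$ itself is an assumption on the output activation, not a consequence of your Jacobian bound) before the lemma is established.
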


We now prove the Theorem~\ref{PrivPRM_MainTheo}.
\begin{proof}[Proof of Theorem~\ref{PrivPRM_MainTheo}]
In Lemma~\ref{lemma_BatchObjFuncGra_On_V}, by normalizing $\mathbf{W}$ using $\mathbf{W}_{WN}(\mathbf{W}) = \mathbf{W} / \|s\mathbf{W}\|_2$ where $s>1$, we have 
\begin{equation}\label{FNGra_On_V}
    \begin{split}
        \left\|\frac{\partial{\mathcal{L}(v_i, v_j; \boldsymbol{\Theta})}}{\partial{\mathbf{V}}}\right\|_2
        \leq M\prod_{l=1}^{L+1}\left\|\mathbf{W}_{WN}(\mathbf{W}_l)\right\|_2 
        \leq M\left(\frac{1}{s}\right)^{L+1}.
    \end{split}
\end{equation}

Consider the worst case, where every gradient is affected in Eq.~(\ref{dpsgd_eq3}) under node-level DP. Then, according to Eq.~(\ref{FNGra_On_V}), we have
\begin{equation}\label{Bound_by_C_g}
  \begin{split}
    BM\left(\frac{1}{s}\right)^{L+1} \leq \mathcal{S}_\nabla,
  \end{split}
\end{equation}
where $\mathcal{S}_\nabla$ is a desired sensitivity that can be preset. Therefore, we have
\begin{equation}\label{FNTheorem_result}
    \begin{split}
        \log_\frac{1}{s}^\frac{\mathcal{S}_\nabla}{BM} - 1
    \leq L,
    \end{split}
\end{equation}
which shows that slightly increasing the number of layers can effectively overcome the high sensitivity.
\end{proof}
The following is a real example showing the calculation to achieve a preset sensitivity.
\begin{example}
Citeseer~\cite{sen2008collective} is a popular citation network with 3,327 nodes and 4,732 edges, which is used as the input data.
Recall $f(v; \boldsymbol{\Theta})$ with $\mathbf{V}\in\mathbb{R}^{N \times r}$, $\mathbf{W}^l\in\mathbb{R}^{r \times d}$, and $\mathbf{W}^{L+1}\in\mathbb{R}^{d \times 1}$.
Using $N=3,327, \gamma=0.85$, we have $M\approx10,464$. Then, with $r=d=128, \mathcal{S}_\nabla=5, B=128, s=5$, we determine that $7 \leq L$ using Eq.~(\ref{FNTheorem_result}).
\end{example}

\subsubsection{\textbf{How to Resist Privacy Budget Splitting?}}\label{resist_privbug} 
Given the total privacy parameters $\epsilon$ and $\sigma$, and the total number of iterations $T$ for model optimization,
the sequential composition property of DP in Theorem~\ref{DP_compos} requires dividing both $\epsilon$ and $\delta$. 
Instead of using advanced composition mechanisms, we evenly divide privacy parameters $\epsilon$ and $\sigma$. In particular, we calculate $\sigma$ using $\sigma = \frac{\sqrt{2\log(1.25/(\delta/T))}}{\epsilon/T}$.
Note that $\delta$ is incorporated within a logarithm function, making its impact negligible. Therefore, we only focus on how to resist privacy budget splitting. 
In fact, Theorem~\ref{PrivPRM_MainTheo} provides a perspective to tackle this issue by reducing $\mathcal{S}_\nabla$ in Eq.~(\ref{Bound_by_C_g}), which is defined as $BM\left(\frac{1}{s}\right)^{L+1} \leq \frac{\mathcal{S}_\nabla}{T}$.
This leads to the following inequality: 
\begin{equation}\label{ResPrivBug}
    \begin{split}
        \log_\frac{1}{s}^\frac{\mathcal{S}_\nabla}{BMT} - 1
    \leq L.
    \end{split}
\end{equation}

Note that $\log(\cdot)$ has a slow growth rate, which means that the additional cost of resistance against privacy budget splitting is not significant.

\begin{algorithm}
\caption{PrivDPR Algorithm}\label{PrivPRM_Alg}
\KwIn{
graph $G$,
privacy parameters $\epsilon$ and $\delta$,
delay factor $\gamma$,
preset sensitivity $\mathcal{S}_\nabla$,
number of training epochs $n_{epochs}$,
batch size $\flat$,
random walk number $R_{wn}$,
random walk length $R_{wl}$,
dimensions $r$ and $d$,
learning rate $\eta$.
}
\KwOut{Synthetic graph $\widetilde{G}$.}
Initialize the learning parameters set\;
\For{$i=1$ to $n_{epochs}$}
{
    \For{$j=1$ to $\lfloor{N / \flat}\rfloor$}
    {
        \tcp{Generate batch samples}
        Create a list of indices, $node\_list$, starting from $j \cdot \flat$ and ending at $(j + 1) \cdot \flat - 1$\;
        \For{each $node\_id$ in node\_list}
        {
        \label{AlgCode_RW_start}
        Generate a batch of node pairs $E_B$ through random walks with walk number $R_{wn}$ and walk length $R_{wl}$\; 
        }
        \label{AlgCode_RW_end}
        \tcp{See Section~\ref{resist_sens} for details}
        Apply weight normalization to each weight $\mathbf{W}$\;
        \label{AlgCode_SpectNorm}
        Update $\mathbf{W}$ by Adam optimizer with learning rate $\eta$\; \label{AlgCode_update_W}
        \tcp{See Sections~\ref{resist_sens} and \ref{resist_privbug} for details}
        Add Gaussian noise to the sum of the gradients for $\mathbf{V}$\; \label{AlgCode_V_AddNoise}
        Update $\mathbf{V}$ by Adam optimizer with learning rate $\eta$\; \label{AlgCode_update_V}
        \tcp{Generate score matrix}
        Sample graphs from $\mathbf{V}\mathbf{V}^\top$ to generate score matrix $\mathbf{S}$\;
        \label{AlgCode_SampleGraphs}
    }
}
Convert score matrix $\mathbf{S}$ to edge-independent model $\widetilde{\mathbf{A}}$:
$\mathbf{S}^{\dagger} \leftarrow \max\{\mathbf{S}, \mathbf{S}^\top\}, \widetilde{\mathbf{A}} \leftarrow \mathbf{S}^{\dagger}/sum(\mathbf{S}^{\dagger}), \widetilde{G} \leftarrow \widetilde{\mathbf{A}}$\;
\label{AlgCode_GraphSyn}
\Return $\widetilde{G}$\;
\end{algorithm}

\subsection{Model Optimization}\label{sec:model_opt}
The pseudo-code of PrivDPR are presented in Algorithm~\ref{PrivPRM_Alg}. We first generate a batch set $E_B$ by random walk.
Next, we input these samples into the node embedding matrix $\mathbf{V}$ using a one-hot encoded vector with a length of $N$. The resulting low-dimensional vectors are then fed into a neural network.
To constrain the gradient with respect to $\mathbf{V}$, we normalize each weight $\mathbf{W}$ using weight normalization, and then update $\mathbf{W}$.
Subsequently, we introduce Gaussian noise to the sum of the gradients for $\mathbf{V}$, and then update $\mathbf{V}$. After each parameter update, we count transitions in score matrix $\mathbf{S}$. After finishing $n_{epochs}$ training, we transform $\mathbf{S}$ into edge probability matrix $\widetilde{\mathbf{A}}$.

\section{Privacy and Complexity Analysis}\label{sec:PrivPRM_PrivCompAnalysis}
\subsection{Privacy Analysis}
In this section, we provide a privacy analysis for PrivDPR.
\begin{theorem}\label{GraPerAlg_Proof}
The synthetic graphs generated by PrivDPR satisfies $(\epsilon,\delta)$-node-level DP.
\begin{proof}
In Algorithm~\ref{PrivPRM_Alg}, for each weight parameter that needs optimization, the total number of iterations $T=n_{epochs}\lfloor{N / \flat}\rfloor$ is fixed a priori, and the desired privacy cost, say $\epsilon$, is split across the iterations: $\epsilon=\epsilon_1+\cdots+\epsilon_{T}$. 
In this work, the privacy budget is evenly split across iterations, so $\epsilon_1=\cdots=\epsilon_{T}=\frac{\epsilon}{T}$. Since Gaussian noise is injected into the embedding matrix $\mathbf{V}$, the $\mathbf{V}$ satisfies $(\frac{\epsilon}{T}, \frac{\delta}{T})$-node-level DP for each iteration. In particular, node privacy is satisfied because the sensitivity is calculated according to the definition of node DP in Section~\ref{resist_sens}. After $n_{epochs}\lfloor{N / \flat}\rfloor$ iterations, $\mathbf{V}$ naturally satisfies $(\epsilon,\delta)$-node-level DP, following the sequential composition property. Also, the resulting graphs obey $(\epsilon,\delta)$-node-level DP, as stipulated by the post-processing property of DP.
\end{proof}
\end{theorem}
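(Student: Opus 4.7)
The plan is to reduce the claim to the standard Gaussian-mechanism / sequential-composition / post-processing pipeline, keying the analysis on the one privacy-sensitive operation in Algorithm~\ref{PrivPRM_Alg}, namely the injection of Gaussian noise into the summed per-edge gradient with respect to $\mathbf{V}$ (Line~\ref{AlgCode_V_AddNoise}). Every subsequent output, including the score matrix $\mathbf{S}$, its symmetrization, and the adjacency $\widetilde{\mathbf{A}}$, is a deterministic function of the trajectory of $\mathbf{V}$, so I expect post-processing to handle all of them at once.

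First, I would pin down the total number of noisy updates $T = n_{epochs}\lfloor N/\flat\rfloor$ a priori, along with the uniform budget split $\epsilon_t = \epsilon/T$ and $\delta_t = \delta/T$, consistent with the noise scale $\sigma = \sqrt{2\log(1.25/(\delta/T))}/(\epsilon/T)$ introduced in Section~\ref{resist_privbug}. Second, I would apply Theorem~\ref{PrivPRM_MainTheo} to bound the $\ell_2$-sensitivity of $\sum_{(v_i,v_j)\in E_B}\partial \mathcal{L}(v_i,v_j;\boldsymbol{\Theta})/\partial \mathbf{V}$ by $\mathcal{S}_\nabla$: under a single-node change, the worst case is that all $B$ per-edge gradients are affected, yet each is norm-bounded by $M(1/s)^{L+1}$ thanks to weight normalization, so the inequality $BM(1/s)^{L+1}\le \mathcal{S}_\nabla$ controls the node-level sensitivity. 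Plugging $\mathcal{S}_\nabla$ into the Gaussian mechanism (Theorem~\ref{GauMech}) then certifies that each single iteration is $(\epsilon/T,\delta/T)$-node-level DP.

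Third, I would invoke sequential composition (Theorem~\ref{DP_compos}) $T-1$ times across the non-adaptive budget split to lift the per-iteration guarantees to $(\epsilon,\delta)$-node-level DP on the full trajectory of $\mathbf{V}$, and hence on the trained embedding. Finally, Theorem~\ref{PostProc_Theo} extends this to $\widetilde{G}$ because the reconstruction pipeline in Line~\ref{AlgCode_GraphSyn} touches the private graph only through the already-noised $\mathbf{V}$.

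The subtlety I expect to spend the most care on is that the auxiliary weights $\mathbf{W}_l$ are updated without noise (Line~\ref{AlgCode_update_W}), yet they feed into the next iteration's gradient computation for $\mathbf{V}$. I would handle this by arguing that the $\mathbf{W}_l$ are never released externally, and, more importantly, that the sensitivity bound of Theorem~\ref{PrivPRM_MainTheo} is uniform in the current values of $\mathbf{W}_l$: weight normalization enforces $\|\mathbf{W}_{WN}(\mathbf{W}_l)\|_2 \le 1/s$ irrespective of the data-dependent drift of the underlying raw weight. This uniform-in-weights sensitivity is what makes the per-step conditional DP guarantee legitimate and allows sequential composition to go through cleanly despite the non-private auxiliary state.
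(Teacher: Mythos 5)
Your proposal follows essentially the same route as the paper's own proof: fix $T=n_{epochs}\lfloor N/\flat\rfloor$ a priori, split $(\epsilon,\delta)$ evenly, certify each iteration as $(\epsilon/T,\delta/T)$-node-level DP via the Gaussian mechanism with the sensitivity bound from Theorem~\ref{PrivPRM_MainTheo}, compose sequentially over the $T$ iterations, and finish with post-processing for $\widetilde{G}$. Your additional care about the non-noised updates of $\mathbf{W}_l$ feeding into later iterations --- resolved by noting that the per-edge gradient bound is uniform over all normalized weights --- is a point the paper's proof silently skips, and is a worthwhile strengthening rather than a departure.
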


\subsection{Complexity Analysis}
Here we analyze the computational complexity of PrivDPR. 
The time complexity for initialization is $O(1)$. 
The time complexity for the outer loop is $n_{epochs}$. 
The time complexity for the inner loop is $\lfloor{N/\flat}\rfloor$. 
Random walk generation has a complexity of $O(\flat R_{wn} R_{wl})$ per batch. 
The complexity of weight normalization and private updates is $O(Ld^2)$. 
The score matrix has a complexity of $O(N^2 r)$. Considering the refined considerations above, the overall time complexity can be expressed as $O(n_{epochs}(\lfloor N / \flat \rfloor (\flat R_{wn} R_{wl} + Ld^2) + N^2r))$. This implies that the time complexity is linear with respect to the number of nodes in the graph, so our method is scalable and can be applied to large-scale graphs.



\section{Experiments}\label{Experiment}
In this section, we will answer the following three questions: 
\begin{itemize}
[leftmargin=5mm]
\item How do the weight normalization parameter $s$ and weight dimension $d$ in neural networks affect the performance of PrivDPR? (see Section~\ref{exp:para_impact})
\item How does the privacy budget $\epsilon$ impact on the performance of PrivDPR? (see Section~\ref{exp:Impact_PrivBudget})
\item How scalable is PrivDPR in the context of link prediction and node classification tasks? (see Section~\ref{exp:link_predict})
\end{itemize}

\noindent\textbf{Datasets.}
We run experiments on the five real-world datasets, Cora\footnote{\url{https://linqs.org/datasets/}}, Citeseer~\cite{sen2008collective}, p2p\footnote{\url{http://snap.stanford.edu/data/p2p-Gnutella08.html}}, Chicago\footnote{\url{http://konect.cc/networks/tntp-ChicagoRegional/}}, 
and Amazon\footnote{\url{https://snap.stanford.edu/data/amazon0505.html}}.
Cora is a citation network of academic papers with 2,708 nodes, 7 classes, and 5,429 edges.
Citeseer is a similar citation network with 3,327 nodes, 6 classes, and 4,732 edges.
p2p is a sequence of snapshots from the Gnutella P2P network, consisting of 6,301 nodes and 20,777 edges.
Chicago is a directed transportation network of the Chicago area with 12,982 nodes and 39,018 edges.
Amazon is a co-purchase network with 410,236 nodes and 3,356,824 edges, representing product connections based on co-purchases.
Since we focus on simple graphs in this work, all datasets are pre-processed to remove self-loops. 

\noindent\textbf{Baselines.} 
We compare our PrivDPR~\footnote{Our code is available at \url{https://github.com/sunnerzs/PrivDPR}.} with four other baselines: GAP~\cite{sajadmanesh2023gap}, DPGGAN~\cite{yang2020secure}, DPGVAE~\cite{yang2020secure}, and DPR (No DP). GAP represents the current state-of-the-art differentially private GNN model, designed to produce private node embeddings. 
For a fair comparison, we configure GAP to generate synthetic graphs using the same generation method as PrivDPR. 
For a fair comparison, \emph{we configure GAP to generate synthetic graphs using the same generation method as PrivDPR.}
In this study, we simulate a scenario in which the graphs contain only structural information, whereas GAP depends on node features. To guarantee a fair evaluation, similar to prior research~\cite{du2022understanding}, we employ randomly generated features as inputs for GAP. Given that random features do not infringe on privacy, we eliminate the noise perturbation on features in GAP. DPR (No DP) serves as the non-private version of PrivDPR.

\noindent\textbf{Parameter Settings.}
In PrivDPR, we vary the privacy budget $\epsilon$ from $\{0.1, 0.2, 0.4, 0.8, 1.6, 3.2\}$ while keeping the privacy parameter $\delta$ fixed at $10^{-5}$. The learning rate is set to $\eta = 1 \times 10^{-3}$, which is consistent with the settings used in DPGGAN and DPGVAE. We limit the maximum number of training epochs to $n_{epochs} = 5$. The embedding dimension is chosen as $r = 128$. Note that we do not specifically show the effect of $r$ as its setting is commonly used in various network embedding methods~\cite{perozzi2014deepwalk,tang2015line,lai2017prune,tu2018unified,zhang2018link,du2022understanding}.
The delay factor is set to $\gamma = 0.85$. The batch size is $\flat = 16$. Also, we use $R_{wn} = 2$ random walks with a length of $R_{wl} = 16$. These values are one-fifth of the recommended values in DeepWalk~\cite{perozzi2014deepwalk}.
In Section~\ref{exp:para_impact}, we investigate the impact of adjusting the weight normalization parameter $s$ on PrivDPR, while considering the value of $\mathcal{S}_\nabla$ as 5, which is determined based on DPGGAN. Note that we do not modify the number of layers $L$ in PrivDPR, since it is calculated dynamically based on $s$ using Eq.~(\ref{ResPrivBug}).
Also, in Section~\ref{exp:para_impact}, we assess the influence of varying the dimension of the hidden layer weight $d$ on PrivDPR.
To ensure consistency with the original papers, we employ the official GitHub implementations for GAP, DPGGAN, and DPGVAE. We replicate the experimental setup described in those papers to maintain consistency and comparability.

\noindent\textbf{Graph Utility Metrics.}
To evaluate the similarity between $G$ and $\widetilde{G}$, we use eight graph topological metrics: triangle count (TC), wedge count (WC), claw count (CC), relative edge distribution entropy (REDE), characteristic path length (CPL), Diameter, size of the largest connected component (LCC), and degree distribution. 

We evaluate the accuracy of PrivDPR in the aforementioned graph metrics patterns over all datasets against the baselines. The accuracy of each method $\mathcal{A}$ on graph $G$ is
measured by the mean relative error (MRE)~\cite{sun2019analyzing}, namely 
$\mathrm{MRE} = \frac{1}{|\mathcal{A}|} \sum_{\mathcal{A}_{i} \in \mathcal{A}}\left|\frac{\hat{\mathcal{A}}_{i}(G) -
               \mathcal{A}_{i}(G)}{\mathcal{A}_{i}(G)}\right|$,    
where $\mathcal{A}_{i}(G)$ denotes the true query result
in input graph $G$, and $\hat{\mathcal{A}}_{i}(G)$ denotes the differentially private query result in $G$. Each result reported is averaged over five repeated runs, that is $|\mathcal{A}|=5$. A lower MRE indicates a lower error and thus a higher data utility. 

The degree distribution is measured with Kolmogorov-Smirnov (KS)~\cite{jorgensen2016publishing,jian2021publishing}, which quantifies the maximum distance between the two-degree distributions. 
Let $F$ and $F^\prime$ denote the cumulative distribution functions estimated from the sorted degree sequences of the original and synthetic graphs, respectively. Then $KS_D=\max _d\left|F(d)-F^\prime(d)\right|$.
The smaller this statistic value, the closer (more similar) the degree distributions between the synthetic and original graphs. For KS, we also report the average performance over five independent runs.
     
\subsection{\textbf{Impact of Parameters}}\label{exp:para_impact}
\textbf{Parameter $s$.} In this experiment, we investigate the impact of the parameter $s$ on the performance of PrivDPR, focusing on graph metrics such as TC, REDE, CPL, and KS. For the datasets Cora, Citeseer, p2p, and Chicago, we consider different values of $s$, namely, $2, 4, 6, 8$. As shown in Table~\ref{Tab_WcLayerNum}, we analyze the standard deviation (SD) of the MRE and $KS_D$ values for PrivDPR across various $s$. Remarkably, we consistently observe that the SD remains consistently not more than $3.5122E-02$ across all datasets. This result highlights the robustness of PrivDPR against variations in $s$.
Based on these findings, we have made the decision to set $s$ as a constant value of 8 in subsequent experiments.

\noindent\textbf{Parameter $d$.} In this experiment, we investigate the impact of the weight dimension parameter $d$ on the performance of PrivDPR in terms of various graph metrics, including TC, REDE, CPL, and $KS_D$. Specifically, we consider different values of $d$, namely, $64, 128, 256, 512$ for datasets Cora, Citeseer, p2p, and Chicago.
As illustrated in Table~\ref{Tab_Wdim}, we observe that the SD of the MRE and $KS_D$ values for PrivDPR with different $d$ is consistently not more than $5.8030E-02$ across all datasets. This finding indicates that PrivDPR exhibits robustness to variations in $d$. Consequently, in subsequent experiments, we fix $d$ at $64$ as a constant value.

\begin{table}[htb!]\small
\caption{Summary of MRE and $KS_D$ with different $s$, given $\epsilon=3.2$ and $\mathcal{S}_\nabla=5$. SD represents the standard deviation of the values in each row.}
\label{Tab_WcLayerNum}
\begin{tabular}{c|c|cccc|c}
\hline
\multirow{2}{*}{Dataset}  & \multirow{2}{*}{Statistics} & \multicolumn{4}{c|}{Parameter $s$}           & \multirow{2}{*}{\textbf{SD}} \\ \cline{3-6}
                          &                             & 2 & 4 & 6 & 8 &                               \\ \hline
\multirow{4}{*}{Cora}     & TC                          & 0.9926  & 0.9914  & 0.9908  & 0.9893  & \textbf{1.4029E-03}           \\
                          & REDE                        & 0.0244  & 0.0245  & 0.0242  & 0.0245  & \textbf{1.6113E-04}           \\
                          & CPL                         & 0.1135  & 0.1106  & 0.1155  & 0.1162  & \textbf{2.4788E-03}           \\
                          & KS                          & 0.5231  & 0.5223  & 0.5224  & 0.5356  & \textbf{6.4809E-03}           \\ \hline
\multirow{4}{*}{Citeseer} & TC                          & 0.9923  & 0.9944  & 0.9919  & 0.9936  & \textbf{1.1799E-03}           \\
                          & REDE                        & 0.0168  & 0.0182  & 0.0172  & 0.0165  & \textbf{7.5261E-04}           \\
                          & CPL                         & 0.3105  & 0.3068  & 0.3116  & 0.3198  & \textbf{5.4899E-03}           \\
                          & KS                          & 0.5792  & 0.5249  & 0.5830  & 0.6084  & \textbf{3.5122E-02}           \\ \hline
\multirow{4}{*}{p2p}      & TC                          & 0.9591  & 0.9593  & 0.9566  & 0.9557  & \textbf{1.7917E-03}           \\
                          & REDE                        & 0.0381  & 0.0379  & 0.0382  & 0.0382  & \textbf{1.5189E-04}           \\
                          & CPL                         & 0.0247  & 0.0232  & 0.0227  & 0.0227  & \textbf{9.5005E-04}           \\
                          & KS                          & 0.1322  & 0.1325  & 0.1224  & 0.1177  & \textbf{7.3605E-03}           \\ \hline
\multirow{4}{*}{Chicago}  & TC                          & 0.9913  & 0.9901  & 0.9950  & 0.9938  & \textbf{2.2624E-03}           \\
                          & REDE                        & 0.0078  & 0.0076  & 0.0078  & 0.0080  & \textbf{1.7852E-04}           \\
                          & CPL                         & 0.8118  & 0.8114  & 0.8124  & 0.8121  & \textbf{4.3182E-04}           \\
                          & KS                          & 0.2854  & 0.2854  & 0.2854  & 0.2778  & \textbf{3.8314E-03}           \\ \hline
\end{tabular}
\end{table}

\begin{table}[htb!]\small
\caption{Summary of MRE and $KS_D$ with different $d$, given $\epsilon=3.2$ and $s=8$. SD represents the standard deviation of the values in each row.}
\label{Tab_Wdim}
\begin{tabular}{c|c|llll|l}
\hline
\multirow{2}{*}{Dataset}  & \multirow{2}{*}{Statistics} & \multicolumn{4}{c|}{Weight Dimension $d$}                                                                   & \multicolumn{1}{c}{\multirow{2}{*}{\textbf{SD}}} \\ \cline{3-6}
                          &                             & \multicolumn{1}{c}{64} & \multicolumn{1}{c}{128} & \multicolumn{1}{c}{256} & \multicolumn{1}{c|}{512} & \multicolumn{1}{c}{}                              \\ \hline
\multirow{4}{*}{Cora}     & TC                          & 0.9893                 & 0.9920                  & 0.9905                  & 0.9914                   & \textbf{1.2012E-03}                               \\
                          & REDE                        & 0.0245                 & 0.0252                  & 0.0248                  & 0.0234                   & \textbf{7.8200E-04}                               \\
                          & CPL                         & 0.1162                 & 0.1068                  & 0.1101                  & 0.1174                   & \textbf{5.0061E-03}                               \\
                          & KS                          & 0.5356                 & 0.5216                  & 0.5227                  & 0.5388                   & \textbf{8.7785E-03}                               \\ \hline
\multirow{4}{*}{Citeseer} & TC                          & 0.9936                 & 0.9923                  & 0.9944                  & 0.9936                   & \textbf{8.8327E-04}                               \\
                          & REDE                        & 0.0165                 & 0.0172                  & 0.0179                  & 0.0170                   & \textbf{5.7647E-04}                               \\
                          & CPL                         & 0.3198                 & 0.3128                  & 0.3041                  & 0.3159                   & \textbf{6.6596E-03}                               \\
                          & KS                          & 0.6084                 & 0.5507                  & 0.4888                  & 0.6120                   & \textbf{5.8030E-02}                               \\ \hline
\multirow{4}{*}{p2p}      & TC                          & 0.9557                 & 0.9532                  & 0.9547                  & 0.9616                   & \textbf{3.6798E-03}                               \\
                          & REDE                        & 0.0382                 & 0.0381                  & 0.0383                  & 0.0383                   & \textbf{1.0559E-04}                               \\
                          & CPL                         & 0.0227                 & 0.0236                  & 0.0205                  & 0.0226                   & \textbf{1.3138E-03}                               \\
                          & KS                          & 0.1177                 & 0.1572                  & 0.1276                  & 0.1271                   & \textbf{1.7174E-02}                               \\ \hline
\multirow{4}{*}{Chicago}  & TC                          & 0.9938                 & 0.9926                  & 0.9857                  & 0.9932                   & \textbf{3.7517E-03}                               \\
                          & REDE                        & 0.0080                 & 0.0077                  & 0.0077                  & 0.0077                   & \textbf{1.5630E-04}                               \\
                          & CPL                         & 0.8121                 & 0.8114                  & 0.8115                  & 0.8116                   & \textbf{3.2051E-04}                               \\
                          & KS                          & 0.2778                 & 0.2689                  & 0.3032                  & 0.2854                   & \textbf{1.4568E-02}                               \\ \hline
\end{tabular}
\end{table}

\subsection{Impact of Privacy Budget on Graph Statistics Preservation}\label{exp:Impact_PrivBudget}
The privacy budget $\epsilon$ is a critical parameter in the context of DP, as it determines the level of privacy provided by the algorithm. We conduct experiments to evaluate the impact of $\epsilon$ on the performance of each private algorithm. Specifically, we present the results on Cora in Figure~\ref{PrivBud_on_cora}. The results on Citeseer, p2p and Chicago datasets can be found in \textbf{Appendix~\ref{appendix_exp:p2p_chicago}}.  
From these figures, PrivDPR consistently outperforms both DPGGAN and DPGVAE with weak edge-level DP guarantees. The reasons why the results of DPGGAN and DPGVAE are poor are twofold: 1) they use the MA mechanism and tend to converge prematurely under small $\epsilon$; and 2) they use a threshold-based method for reconstructing synthetic graphs, which potentially generates numerous disconnected subgraphs within the synthesized graph.
Moreover, it is worth noting that in most cases, PrivDPR achieves comparable results to DPR (No DP) and surpasses GAP, even with a small privacy budget of $\epsilon=0.1$. This phenomenon can be attributed to two factors. 
First, our designed deep PageRank effectively captures the structural properties of the input graph. 
Second, the theorem presented in Section~\ref{sec:gra_perturb} addresses challenges such as high sensitivity and excessive splitting on the privacy budget, further enhancing the performance of PrivDPR.

\begin{figure*}[htb!]
  \centerline{\includegraphics[width=6.7in]{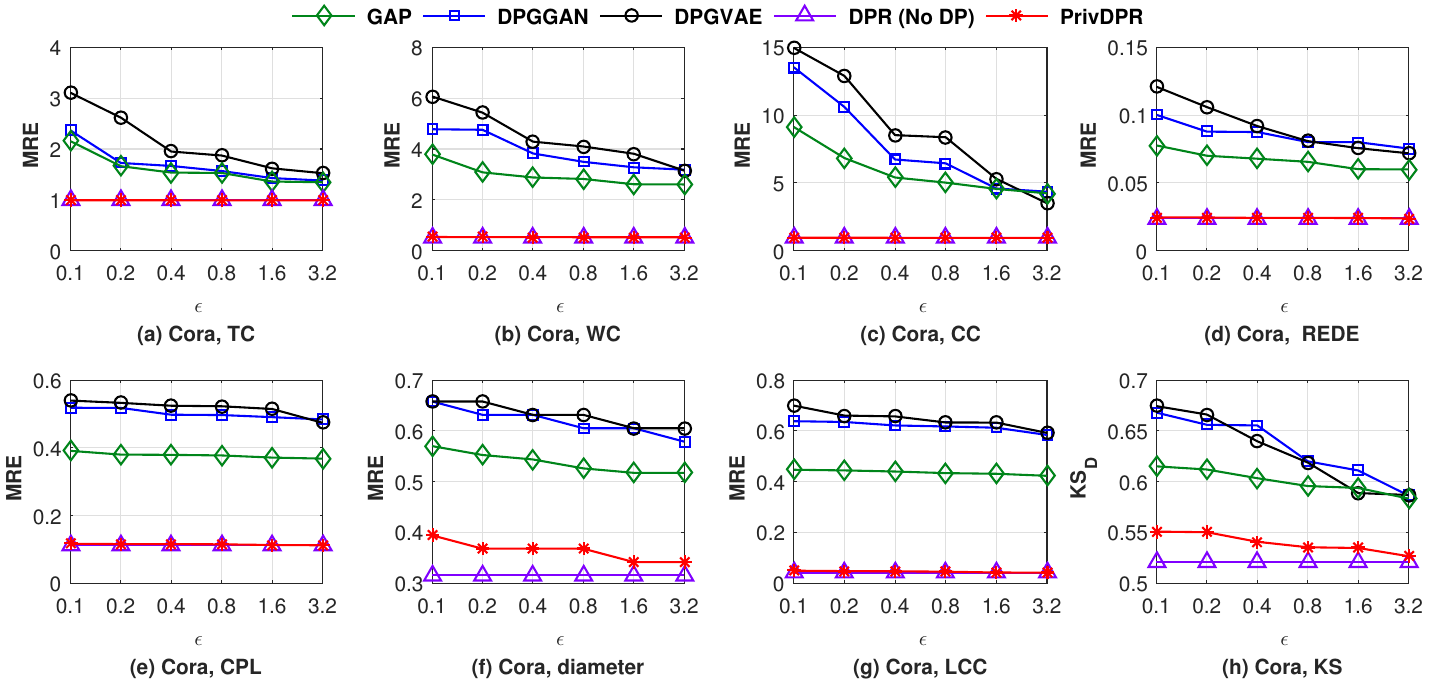}}
  \caption{Privacy budget on Cora}\label{PrivBud_on_cora}
\end{figure*}

\subsection{Link Prediction and Node Classification}\label{exp:link_predict}
For the link prediction task, the existing links in each dataset are randomly divided into a training set (80\%) and a test set (20\%). To evaluate the performance of link prediction, we randomly select an equal number of node pairs without connected edges as negative test links for the test set. Additionally, for the training set, we sample the same number of node pairs without edges to construct negative training data. We measure performance using the area under the ROC curve (AUC). The AUC results and analysis for all methods are presented in Tables~\ref{tab:auc_score1} and \ref{tab:auc_score}, with $\epsilon=0.1$ and $\epsilon=3.2$, in \textbf{Appendix~\ref{appendix_exp:link_pred}}. In summary, PrivDPR consistently achieves the highest AUC among all privacy-preserving algorithms and maintains high stability.

For the node classification task, we randomly sample 90\% of the nodes as training data and randomly sample 10\% of the nodes outside the training set as test data. We follow the procedure of~\cite{epasto2022differentially} and evaluate our embeddings using Micro-F1 score.
We report the results in Table~\ref{tab:micro_f1} in \textbf{Appendix~\ref{appendix_exp:node_class}}, with $\epsilon=0.1$ and $\epsilon=3.2$. To summarize, PrivDPR outperforms other privacy methods across various datasets in terms of Micro-F1 and SD. This indicates PrivDPR is an outstanding off-the-shelf method for different graph downstream tasks.

\section{Related Work}\label{Related_work}
Unlike traditional privacy-preserving methods~\cite{sweeney2002k,machanavajjhala2007diversity,hu2014private}, differential privacy (DP) and its variant, local differential privacy (LDP), offer strong privacy guarantees and robustness against adversaries with prior knowledge and have been widely applied across various fields~\cite{ye2019privkv, ye2021privkvm, ye2021beyond, ye2023stateful, Zhang2023trajectory}. The related work of this paper covers differentially private shallow and deep graph generation models.

\subsection{Private Shallow Graph Generation Models}
Several research efforts have been dedicated to achieving differentially private publication for social graph data.
One approach is to generate representative synthetic graphs using the Kronecker graph model, as explored by Mir and Wright~\cite{Mir2009ADP}. They estimate the model parameters from the input graph under DP.
Another approach is the Pygmalion model proposed by Sala {\it et al}., which utilizes the dK-series of the input graph to capture the distribution of observed degree pairs on edges~\cite{sala2011sharing}. This model has been combined with smooth sensitivity to construct synthetic graphs~\cite{wang2013preserving}.
Xiao {\it et al}. encode the graph structure through private edge counting queries under the hierarchical random graph model and report improved results compared to the dK-series approach~\cite{xiao2014differentially}.
Chen {\it et al}. employ the exponential mechanism to sample an adjacency matrix after clustering the input graph~\cite{chen2014correlated}. Proserpio {\it et al}. suggest down-weighting the edges of a graph non-uniformly to mitigate high global sensitivity arising from very high degree nodes. They demonstrate this approach, combined with MCMC-based sampling, for generating private synthetic graphs~\cite{proserpio2014calibrating}.
Gao and Li propose a private scheme to preserve both the adjacency matrix and persistent homology, specifically targeting the persistence structures in the form of holes~\cite{gao2019phdp}. 
These methods, along with subsequent research on private graph release~\cite{eliavs2020differentially, gupta2012iterative, yuan2023privgraph}, typically guarantee weak edge-level DP.

\subsection{Private Deep Graph Generation Models}
With the rapid development of deep learning, numerous advanced deep graph generation models have emerged in recent times. These models employ various powerful neural networks in a learn-to-generate manner~\cite{kipf2016variational,bojchevski2018netgan,you2018graphrnn}. For instance, NetGAN~\cite{bojchevski2018netgan} converts graphs into biased random walks, learns to generate walks using GAN, and then constructs graphs from the generated walks. GraphRNN~\cite{you2018graphrnn}, on the other hand, treats graph generation as a sequence of node and edge additions, and models it using a heuristic breadth-first search scheme and hierarchical RNN. These deep learning models can generate graphs with richer properties and flexible structures learned from real-world networks. However, existing research on deep graph generation has not thoroughly examined the potential privacy threats associated with training and generating graphs using powerful models. A recent related solution proposed by Yang {\it et al}. introduces DPGGAN and DPGVAE models for graph synthesis~\cite{yang2020secure}. They improve MA~\cite{abadi2016deep}, which is an effective strategy for computing privacy loss after multiple queries, but only achieves weak edge-level DP. Recent research endeavors~\cite{daigavane2021node,zhang2024dpar,sajadmanesh2023gap,xiang2023preserving} have been dedicated to the advancement of node-level differentially private GNNs. These efforts aim to address the issue of high sensitivity, but achieving an optimal balance between privacy and utility remains a significant challenge. This is primarily because they often employ advanced composition mechanisms to manage privacy budget splitting and yet tend to converge prematurely when working with a small privacy budget. 
\section{Conclusion}\label{ConcluSection}
This paper focuses on synthetic graph generation under DP. The underlying highlights lie in the following two aspects. 
First, we design a novel privacy-preserving deep PageRank for graph synthesis, called PrivDPR, which achieves DP by adding noise to the gradient for a specific weight during learning.
Second, we theoretically show that increasing the number of layers can effectively overcome the challenges associated with high sensitivity and privacy budget splitting.
Through privacy analysis, we prove that the generated synthetic graph satisfies $(\epsilon,\delta)$-node-level DP. Extensive experiments on real-world graph datasets show that our solution substantially outperforms state-of-the-art competitors. 
Our future focus is on developing more graph generation techniques for node embeddings to enhance the utility of synthetic graphs.

\begin{acks}
This work was supported by the National Natural Science Foundation of China (Grant No: 62372122, 92270123 and 62072390), and the Research Grants Council, Hong Kong SAR, China (Grant No: 15224124, 25207224, C2004-21GF and C2003-23Y).
We also thank Dr. Nan Fu and Dr. Lihe Hou for helpful discussions.
\end{acks}

\bibliographystyle{ACM-Reference-Format}
\bibliography{mybibfile.bib}


\begin{thebibliography}{52}


\ifx \showCODEN    \undefined \def \showCODEN     #1{\unskip}     \fi
\ifx \showDOI      \undefined \def \showDOI       #1{#1}\fi
\ifx \showISBNx    \undefined \def \showISBNx     #1{\unskip}     \fi
\ifx \showISBNxiii \undefined \def \showISBNxiii  #1{\unskip}     \fi
\ifx \showISSN     \undefined \def \showISSN      #1{\unskip}     \fi
\ifx \showLCCN     \undefined \def \showLCCN      #1{\unskip}     \fi
\ifx \shownote     \undefined \def \shownote      #1{#1}          \fi
\ifx \showarticletitle \undefined \def \showarticletitle #1{#1}   \fi
\ifx \showURL      \undefined \def \showURL       {\relax}        \fi
\providecommand\bibfield[2]{#2}
\providecommand\bibinfo[2]{#2}
\providecommand\natexlab[1]{#1}
\providecommand\showeprint[2][]{arXiv:#2}

\bibitem[Abadi et~al\mbox{.}(2016)]%
        {abadi2016deep}
\bibfield{author}{\bibinfo{person}{Martin Abadi}, \bibinfo{person}{Andy Chu},
  \bibinfo{person}{Ian Goodfellow}, \bibinfo{person}{H~Brendan McMahan},
  \bibinfo{person}{Ilya Mironov}, \bibinfo{person}{Kunal Talwar}, {and}
  \bibinfo{person}{Li Zhang}.} \bibinfo{year}{2016}\natexlab{}.
\newblock \showarticletitle{Deep learning with differential privacy}. In
  \bibinfo{booktitle}{\emph{ACM SIGSAC Conference on Computer and
  Communications Security}}. \bibinfo{pages}{308--318}.
\newblock


\bibitem[Bojchevski et~al\mbox{.}(2018)]%
        {bojchevski2018netgan}
\bibfield{author}{\bibinfo{person}{Aleksandar Bojchevski},
  \bibinfo{person}{Oleksandr Shchur}, \bibinfo{person}{Daniel Z{\"u}gner},
  {and} \bibinfo{person}{Stephan G{\"u}nnemann}.}
  \bibinfo{year}{2018}\natexlab{}.
\newblock \showarticletitle{NetGAN: Generating graphs via random walks}. In
  \bibinfo{booktitle}{\emph{International Conference on Machine Learning}}.
  \bibinfo{pages}{610--619}.
\newblock


\bibitem[Chen et~al\mbox{.}(2014)]%
        {chen2014correlated}
\bibfield{author}{\bibinfo{person}{Rui Chen}, \bibinfo{person}{Benjamin~C
  Fung}, \bibinfo{person}{Philip~S Yu}, {and} \bibinfo{person}{Bipin~C Desai}.}
  \bibinfo{year}{2014}\natexlab{}.
\newblock \showarticletitle{Correlated network data publication via
  differential privacy}.
\newblock \bibinfo{journal}{\emph{The VLDB Journal}} \bibinfo{volume}{23},
  \bibinfo{number}{4} (\bibinfo{year}{2014}), \bibinfo{pages}{653--676}.
\newblock


\bibitem[Daigavane et~al\mbox{.}(2021)]%
        {daigavane2021node}
\bibfield{author}{\bibinfo{person}{Ameya Daigavane}, \bibinfo{person}{Gagan
  Madan}, \bibinfo{person}{Aditya Sinha}, \bibinfo{person}{Abhradeep~Guha
  Thakurta}, \bibinfo{person}{Gaurav Aggarwal}, {and} \bibinfo{person}{Prateek
  Jain}.} \bibinfo{year}{2021}\natexlab{}.
\newblock \showarticletitle{Node-level differentially private graph neural
  networks}.
\newblock \bibinfo{journal}{\emph{arXiv preprint arXiv:2111.15521}}
  (\bibinfo{year}{2021}).
\newblock


\bibitem[Du et~al\mbox{.}(2022)]%
        {du2022understanding}
\bibfield{author}{\bibinfo{person}{Lun Du}, \bibinfo{person}{Xu Chen},
  \bibinfo{person}{Fei Gao}, \bibinfo{person}{Qiang Fu},
  \bibinfo{person}{Kunqing Xie}, \bibinfo{person}{Shi Han}, {and}
  \bibinfo{person}{Dongmei Zhang}.} \bibinfo{year}{2022}\natexlab{}.
\newblock \showarticletitle{Understanding and improvement of adversarial
  training for network embedding from an optimization perspective}. In
  \bibinfo{booktitle}{\emph{ACM International Conference on Web Search and Data
  Mining}}. \bibinfo{pages}{230--240}.
\newblock


\bibitem[Dwork(2006)]%
        {dwork2006differential}
\bibfield{author}{\bibinfo{person}{Cynthia Dwork}.}
  \bibinfo{year}{2006}\natexlab{}.
\newblock \showarticletitle{Differential privacy}. In
  \bibinfo{booktitle}{\emph{International Colloquium on Automata, Languages,
  and Programming}}. \bibinfo{pages}{1--12}.
\newblock


\bibitem[Dwork et~al\mbox{.}(2006)]%
        {dwork2006calibrating}
\bibfield{author}{\bibinfo{person}{Cynthia Dwork}, \bibinfo{person}{Frank
  McSherry}, \bibinfo{person}{Kobbi Nissim}, {and} \bibinfo{person}{Adam
  Smith}.} \bibinfo{year}{2006}\natexlab{}.
\newblock \showarticletitle{Calibrating noise to sensitivity in private data
  analysis}. In \bibinfo{booktitle}{\emph{Theory of Cryptography Conference}}.
  \bibinfo{pages}{265--284}.
\newblock


\bibitem[Dwork and Roth(2014)]%
        {dwork2014algorithmic}
\bibfield{author}{\bibinfo{person}{Cynthia Dwork} {and} \bibinfo{person}{Aaron
  Roth}.} \bibinfo{year}{2014}\natexlab{}.
\newblock \showarticletitle{The algorithmic foundations of differential
  privacy}.
\newblock \bibinfo{journal}{\emph{Foundations and Trends in Theoretical
  Computer Science.}} \bibinfo{volume}{9}, \bibinfo{number}{3--4}
  (\bibinfo{year}{2014}), \bibinfo{pages}{211--407}.
\newblock


\bibitem[Eli{\'a}{\v{s}} et~al\mbox{.}(2020)]%
        {eliavs2020differentially}
\bibfield{author}{\bibinfo{person}{Marek Eli{\'a}{\v{s}}},
  \bibinfo{person}{Michael Kapralov}, \bibinfo{person}{Janardhan Kulkarni},
  {and} \bibinfo{person}{Yin~Tat Lee}.} \bibinfo{year}{2020}\natexlab{}.
\newblock \showarticletitle{Differentially private release of synthetic
  graphs}. In \bibinfo{booktitle}{\emph{Annual ACM-SIAM Symposium on Discrete
  Algorithms}}. \bibinfo{pages}{560--578}.
\newblock


\bibitem[Epasto et~al\mbox{.}(2022)]%
        {epasto2022differentially}
\bibfield{author}{\bibinfo{person}{Alessandro Epasto}, \bibinfo{person}{Vahab
  Mirrokni}, \bibinfo{person}{Bryan Perozzi}, \bibinfo{person}{Anton
  Tsitsulin}, {and} \bibinfo{person}{Peilin Zhong}.}
  \bibinfo{year}{2022}\natexlab{}.
\newblock \showarticletitle{Differentially private graph learning via
  sensitivity-bounded personalized pagerank}. In
  \bibinfo{booktitle}{\emph{International Conference Neural Information
  Processing Systems}}. \bibinfo{pages}{22617--22627}.
\newblock


\bibitem[Fu et~al\mbox{.}(2024)]%
        {fu2024dpsur}
\bibfield{author}{\bibinfo{person}{Jie Fu}, \bibinfo{person}{Qingqing Ye},
  \bibinfo{person}{Haibo Hu}, \bibinfo{person}{Zhili Chen},
  \bibinfo{person}{Lulu Wang}, \bibinfo{person}{Kuncan Wang}, {and}
  \bibinfo{person}{Xun Ran}.} \bibinfo{year}{2024}\natexlab{}.
\newblock \showarticletitle{DPSUR: Accelerating differentially private
  stochastic gradient descent using selective update and release}. In
  \bibinfo{booktitle}{\emph{Proceedings of the VLDB Endowment}}.
  \bibinfo{pages}{1200--1213}.
\newblock


\bibitem[Gao and Li(2019)]%
        {gao2019phdp}
\bibfield{author}{\bibinfo{person}{Tianchong Gao} {and} \bibinfo{person}{Feng
  Li}.} \bibinfo{year}{2019}\natexlab{}.
\newblock \showarticletitle{Preserving persistent homology in differentially
  private graph publications}. In \bibinfo{booktitle}{\emph{IEEE INFOCOM
  Conference on Computer Communications}}. \bibinfo{pages}{2242--2250}.
\newblock


\bibitem[Gupta et~al\mbox{.}(2012)]%
        {gupta2012iterative}
\bibfield{author}{\bibinfo{person}{Anupam Gupta}, \bibinfo{person}{Aaron Roth},
  {and} \bibinfo{person}{Jonathan Ullman}.} \bibinfo{year}{2012}\natexlab{}.
\newblock \showarticletitle{Iterative constructions and private data release}.
  In \bibinfo{booktitle}{\emph{Theory of Cryptography Conference}}.
  \bibinfo{pages}{339--356}.
\newblock


\bibitem[Hay et~al\mbox{.}(2009)]%
        {hay2009accurate}
\bibfield{author}{\bibinfo{person}{Michael Hay}, \bibinfo{person}{Chao Li},
  \bibinfo{person}{Gerome Miklau}, {and} \bibinfo{person}{David Jensen}.}
  \bibinfo{year}{2009}\natexlab{}.
\newblock \showarticletitle{Accurate estimation of the degree distribution of
  private networks}. In \bibinfo{booktitle}{\emph{IEEE International Conference
  on Data Mining}}. \bibinfo{pages}{169--178}.
\newblock


\bibitem[Hay et~al\mbox{.}(2011)]%
        {hay2011privacy}
\bibfield{author}{\bibinfo{person}{Michael Hay}, \bibinfo{person}{Kun Liu},
  \bibinfo{person}{Gerome Miklau}, \bibinfo{person}{Jian Pei}, {and}
  \bibinfo{person}{Evimaria Terzi}.} \bibinfo{year}{2011}\natexlab{}.
\newblock \showarticletitle{Privacy-aware data management in information
  networks}. In \bibinfo{booktitle}{\emph{ACM SIGMOD Conference on Management
  of Data}}. \bibinfo{pages}{1201--1204}.
\newblock


\bibitem[Hu et~al\mbox{.}(2014)]%
        {hu2014private}
\bibfield{author}{\bibinfo{person}{Haibo Hu}, \bibinfo{person}{Jianliang Xu},
  \bibinfo{person}{Xizhong Xu}, \bibinfo{person}{Kexin Pei},
  \bibinfo{person}{Byron Choi}, {and} \bibinfo{person}{Shuigeng Zhou}.}
  \bibinfo{year}{2014}\natexlab{}.
\newblock \showarticletitle{Private search on key-value stores with
  hierarchical indexes}. In \bibinfo{booktitle}{\emph{IEEE International
  Conference on Data Engineering}}. \bibinfo{pages}{628--639}.
\newblock


\bibitem[Jian et~al\mbox{.}(2021)]%
        {jian2021publishing}
\bibfield{author}{\bibinfo{person}{Xun Jian}, \bibinfo{person}{Yue Wang}, {and}
  \bibinfo{person}{Lei Chen}.} \bibinfo{year}{2021}\natexlab{}.
\newblock \showarticletitle{Publishing graphs under node differential privacy}.
\newblock \bibinfo{journal}{\emph{IEEE Transactions on Knowledge and Data
  Engineering}} \bibinfo{volume}{35}, \bibinfo{number}{4}
  (\bibinfo{year}{2021}), \bibinfo{pages}{4164--4177}.
\newblock


\bibitem[Jiang et~al\mbox{.}(2016)]%
        {jiang2016big}
\bibfield{author}{\bibinfo{person}{Fan Jiang}, \bibinfo{person}{Carson~K
  Leung}, {and} \bibinfo{person}{Adam~GM Pazdor}.}
  \bibinfo{year}{2016}\natexlab{}.
\newblock \showarticletitle{Big data mining of social networks for friend
  recommendation}. In \bibinfo{booktitle}{\emph{IEEE/ACM International
  Conference on Advances in Social Networks Analysis and Mining}}.
  \bibinfo{pages}{921--922}.
\newblock


\bibitem[Jorgensen et~al\mbox{.}(2016)]%
        {jorgensen2016publishing}
\bibfield{author}{\bibinfo{person}{Zach Jorgensen}, \bibinfo{person}{Ting Yu},
  {and} \bibinfo{person}{Graham Cormode}.} \bibinfo{year}{2016}\natexlab{}.
\newblock \showarticletitle{Publishing attributed social graphs with formal
  privacy guarantees}. In \bibinfo{booktitle}{\emph{ACM SIGMOD Conference on
  Management of Data}}. \bibinfo{pages}{107--122}.
\newblock


\bibitem[Kipf and Welling(2016)]%
        {kipf2016variational}
\bibfield{author}{\bibinfo{person}{Thomas~N Kipf} {and} \bibinfo{person}{Max
  Welling}.} \bibinfo{year}{2016}\natexlab{}.
\newblock \showarticletitle{Variational graph auto-encoders}.
\newblock \bibinfo{journal}{\emph{arXiv preprint arXiv:1611.07308}}
  (\bibinfo{year}{2016}).
\newblock


\bibitem[Lai et~al\mbox{.}(2017)]%
        {lai2017prune}
\bibfield{author}{\bibinfo{person}{Yi-An Lai}, \bibinfo{person}{Chin-Chi Hsu},
  \bibinfo{person}{Wen~Hao Chen}, \bibinfo{person}{Mi-Yen Yeh}, {and}
  \bibinfo{person}{Shou-De Lin}.} \bibinfo{year}{2017}\natexlab{}.
\newblock \showarticletitle{PRUNE: Preserving proximity and global ranking for
  network embedding}. In \bibinfo{booktitle}{\emph{International Conference
  Neural Information Processing Systems}}. \bibinfo{pages}{5257--5266}.
\newblock


\bibitem[Leskovec et~al\mbox{.}(2010)]%
        {leskovec2010signed}
\bibfield{author}{\bibinfo{person}{Jure Leskovec}, \bibinfo{person}{Daniel
  Huttenlocher}, {and} \bibinfo{person}{Jon Kleinberg}.}
  \bibinfo{year}{2010}\natexlab{}.
\newblock \showarticletitle{Signed networks in social media}. In
  \bibinfo{booktitle}{\emph{International Conference on Human Factors in
  Computing Systems}}. \bibinfo{pages}{1361--1370}.
\newblock


\bibitem[Leskovec et~al\mbox{.}(2007)]%
        {leskovec2007graph}
\bibfield{author}{\bibinfo{person}{Jure Leskovec}, \bibinfo{person}{Jon
  Kleinberg}, {and} \bibinfo{person}{Christos Faloutsos}.}
  \bibinfo{year}{2007}\natexlab{}.
\newblock \showarticletitle{Graph Evolution: Densification and shrinking
  diameters}.
\newblock \bibinfo{journal}{\emph{ACM Transactions on Knowledge Discovery From
  Data}} \bibinfo{volume}{1}, \bibinfo{number}{1} (\bibinfo{year}{2007}),
  \bibinfo{pages}{2--es}.
\newblock


\bibitem[Machanavajjhala et~al\mbox{.}(2007)]%
        {machanavajjhala2007diversity}
\bibfield{author}{\bibinfo{person}{Ashwin Machanavajjhala},
  \bibinfo{person}{Daniel Kifer}, \bibinfo{person}{Johannes Gehrke}, {and}
  \bibinfo{person}{Muthuramakrishnan Venkitasubramaniam}.}
  \bibinfo{year}{2007}\natexlab{}.
\newblock \showarticletitle{\emph{l}-diversity: Privacy beyond
  \emph{k}-anonymity}.
\newblock \bibinfo{journal}{\emph{ACM Transactions on Knowledge Discovery from
  Data}} \bibinfo{volume}{1}, \bibinfo{number}{1} (\bibinfo{year}{2007}),
  \bibinfo{pages}{3--es}.
\newblock


\bibitem[Mir and Wright(2009)]%
        {Mir2009ADP}
\bibfield{author}{\bibinfo{person}{Darakhshan~J. Mir} {and}
  \bibinfo{person}{Rebecca~N. Wright}.} \bibinfo{year}{2009}\natexlab{}.
\newblock \showarticletitle{A differentially private graph estimator}. In
  \bibinfo{booktitle}{\emph{IEEE International Conference on Data Mining
  Workshops}}. \bibinfo{pages}{122--129}.
\newblock


\bibitem[Mucha and Porter(2010)]%
        {mucha2010communities}
\bibfield{author}{\bibinfo{person}{Peter~J Mucha} {and}
  \bibinfo{person}{Mason~A Porter}.} \bibinfo{year}{2010}\natexlab{}.
\newblock \showarticletitle{Communities in multislice voting networks}.
\newblock \bibinfo{journal}{\emph{Chaos: An Interdisciplinary Journal of
  Nonlinear Science}} \bibinfo{volume}{20}, \bibinfo{number}{4}
  (\bibinfo{year}{2010}).
\newblock


\bibitem[Olatunji et~al\mbox{.}(2023)]%
        {olatunji2021releasing}
\bibfield{author}{\bibinfo{person}{Iyiola~E Olatunji}, \bibinfo{person}{Thorben
  Funke}, {and} \bibinfo{person}{Megha Khosla}.}
  \bibinfo{year}{2023}\natexlab{}.
\newblock \showarticletitle{Releasing graph neural networks with differential
  privacy guarantees}.
\newblock \bibinfo{journal}{\emph{Transactions on Machine Learning Research}}
  (\bibinfo{year}{2023}), \bibinfo{pages}{2835--8856}.
\newblock


\bibitem[Page et~al\mbox{.}(1998)]%
        {page1998pagerank}
\bibfield{author}{\bibinfo{person}{Lawrence Page}, \bibinfo{person}{Sergey
  Brin}, \bibinfo{person}{Rajeev Motwani}, {and} \bibinfo{person}{Terry
  Winograd}.} \bibinfo{year}{1998}\natexlab{}.
\newblock \bibinfo{booktitle}{\emph{The PageRank citation ranking: Bring order
  to the web}}.
\newblock \bibinfo{type}{{T}echnical {R}eport}. \bibinfo{institution}{Stanford
  InfoLab}.
\newblock


\bibitem[Perozzi et~al\mbox{.}(2014)]%
        {perozzi2014deepwalk}
\bibfield{author}{\bibinfo{person}{Bryan Perozzi}, \bibinfo{person}{Rami
  Al-Rfou}, {and} \bibinfo{person}{Steven Skiena}.}
  \bibinfo{year}{2014}\natexlab{}.
\newblock \showarticletitle{DeepWalk: Online learning of social
  representations}. In \bibinfo{booktitle}{\emph{ACM SIGKDD Conference on
  Knowledge Discovery and Data Mining}}. \bibinfo{pages}{701--710}.
\newblock


\bibitem[Proserpio et~al\mbox{.}(2014)]%
        {proserpio2014calibrating}
\bibfield{author}{\bibinfo{person}{Davide Proserpio}, \bibinfo{person}{Sharon
  Goldberg}, {and} \bibinfo{person}{Frank McSherry}.}
  \bibinfo{year}{2014}\natexlab{}.
\newblock \showarticletitle{Calibrating data to sensitivity in private data
  analysis}. In \bibinfo{booktitle}{\emph{Proceedings of the VLDB Endowment}}.
  \bibinfo{pages}{637--648}.
\newblock


\bibitem[Rendsburg et~al\mbox{.}(2020)]%
        {rendsburg2020netgan}
\bibfield{author}{\bibinfo{person}{Luca Rendsburg}, \bibinfo{person}{Holger
  Heidrich}, {and} \bibinfo{person}{Ulrike Von~Luxburg}.}
  \bibinfo{year}{2020}\natexlab{}.
\newblock \showarticletitle{NetGAN without GAN: From random walks to low-rank
  approximations}. In \bibinfo{booktitle}{\emph{International Conference on
  Machine Learning}}. \bibinfo{pages}{8073--8082}.
\newblock


\bibitem[Sajadmanesh et~al\mbox{.}(2023)]%
        {sajadmanesh2023gap}
\bibfield{author}{\bibinfo{person}{Sina Sajadmanesh},
  \bibinfo{person}{Ali~Shahin Shamsabadi}, \bibinfo{person}{Aur{\'e}lien
  Bellet}, {and} \bibinfo{person}{Daniel Gatica-Perez}.}
  \bibinfo{year}{2023}\natexlab{}.
\newblock \showarticletitle{GAP: Differentially private graph neural networks
  with aggregation perturbation}. In \bibinfo{booktitle}{\emph{USENIX Security
  Symposium}}. \bibinfo{pages}{3223--3240}.
\newblock


\bibitem[Sala et~al\mbox{.}(2011)]%
        {sala2011sharing}
\bibfield{author}{\bibinfo{person}{Alessandra Sala}, \bibinfo{person}{Xiaohan
  Zhao}, \bibinfo{person}{Christo Wilson}, \bibinfo{person}{Haitao Zheng},
  {and} \bibinfo{person}{Ben~Y Zhao}.} \bibinfo{year}{2011}\natexlab{}.
\newblock \showarticletitle{Sharing graphs using differentially private graph
  models}. In \bibinfo{booktitle}{\emph{ACM SIGCOMM Internet Measurement
  Conference}}. \bibinfo{pages}{81--98}.
\newblock


\bibitem[Sen et~al\mbox{.}(2008)]%
        {sen2008collective}
\bibfield{author}{\bibinfo{person}{Prithviraj Sen}, \bibinfo{person}{Galileo
  Namata}, \bibinfo{person}{Mustafa Bilgic}, \bibinfo{person}{Lise Getoor},
  \bibinfo{person}{Brian Galligher}, {and} \bibinfo{person}{Tina Eliassi-Rad}.}
  \bibinfo{year}{2008}\natexlab{}.
\newblock \showarticletitle{Collective classification in network data}.
\newblock \bibinfo{journal}{\emph{AI Magazine}} \bibinfo{volume}{29},
  \bibinfo{number}{3} (\bibinfo{year}{2008}), \bibinfo{pages}{93--106}.
\newblock


\bibitem[Sharma et~al\mbox{.}(2019)]%
        {sharma2019brief}
\bibfield{author}{\bibinfo{person}{Dushyant Sharma}, \bibinfo{person}{Rishabh
  Shukla}, \bibinfo{person}{Anil~Kumar Giri}, {and} \bibinfo{person}{Sumit
  Kumar}.} \bibinfo{year}{2019}\natexlab{}.
\newblock \showarticletitle{A brief review on search engine optimization}. In
  \bibinfo{booktitle}{\emph{International Conference on Cloud Computing, Data
  Science and Engineering}}. \bibinfo{pages}{687--692}.
\newblock


\bibitem[Sun et~al\mbox{.}(2019)]%
        {sun2019analyzing}
\bibfield{author}{\bibinfo{person}{Haipei Sun}, \bibinfo{person}{Xiaokui Xiao},
  \bibinfo{person}{Issa Khalil}, \bibinfo{person}{Yin Yang},
  \bibinfo{person}{Zhan Qin}, \bibinfo{person}{Hui Wang}, {and}
  \bibinfo{person}{Ting Yu}.} \bibinfo{year}{2019}\natexlab{}.
\newblock \showarticletitle{Analyzing subgraph statistics from extended local
  views with decentralized differential privacy}. In
  \bibinfo{booktitle}{\emph{ACM SIGSAC Conference on Computer and
  Communications Security}}. \bibinfo{pages}{703--717}.
\newblock


\bibitem[Sweeney(2002)]%
        {sweeney2002k}
\bibfield{author}{\bibinfo{person}{Latanya Sweeney}.}
  \bibinfo{year}{2002}\natexlab{}.
\newblock \showarticletitle{\emph{k}-anonymity: A model for protecting
  privacy}.
\newblock \bibinfo{journal}{\emph{International Journal of Uncertainty,
  Fuzziness and Knowledge-Based Systems}} \bibinfo{volume}{10},
  \bibinfo{number}{05} (\bibinfo{year}{2002}), \bibinfo{pages}{557--570}.
\newblock


\bibitem[Tang et~al\mbox{.}(2015)]%
        {tang2015line}
\bibfield{author}{\bibinfo{person}{Jian Tang}, \bibinfo{person}{Meng Qu},
  \bibinfo{person}{Mingzhe Wang}, \bibinfo{person}{Ming Zhang},
  \bibinfo{person}{Jun Yan}, {and} \bibinfo{person}{Qiaozhu Mei}.}
  \bibinfo{year}{2015}\natexlab{}.
\newblock \showarticletitle{LINE: Large-scale information network embedding}.
  In \bibinfo{booktitle}{\emph{International Conference on World Wide Web}}.
  \bibinfo{pages}{1067--1077}.
\newblock


\bibitem[Tu et~al\mbox{.}(2018)]%
        {tu2018unified}
\bibfield{author}{\bibinfo{person}{Cunchao Tu}, \bibinfo{person}{Xiangkai
  Zeng}, \bibinfo{person}{Hao Wang}, \bibinfo{person}{Zhengyan Zhang},
  \bibinfo{person}{Zhiyuan Liu}, \bibinfo{person}{Maosong Sun},
  \bibinfo{person}{Bo Zhang}, {and} \bibinfo{person}{Leyu Lin}.}
  \bibinfo{year}{2018}\natexlab{}.
\newblock \showarticletitle{A unified framework for community detection and
  network representation learning}.
\newblock \bibinfo{journal}{\emph{IEEE Transactions on Knowledge and Data
  Engineering}} \bibinfo{volume}{31}, \bibinfo{number}{6}
  (\bibinfo{year}{2018}), \bibinfo{pages}{1051--1065}.
\newblock


\bibitem[Wang and Wu(2013)]%
        {wang2013preserving}
\bibfield{author}{\bibinfo{person}{Yue Wang} {and} \bibinfo{person}{Xintao
  Wu}.} \bibinfo{year}{2013}\natexlab{}.
\newblock \showarticletitle{Preserving differential privacy in
  degree-correlation based graph generation}.
\newblock \bibinfo{journal}{\emph{Transactions on data privacy}}
  \bibinfo{volume}{6}, \bibinfo{number}{2} (\bibinfo{year}{2013}),
  \bibinfo{pages}{127}.
\newblock


\bibitem[Xiang et~al\mbox{.}(2024)]%
        {xiang2023preserving}
\bibfield{author}{\bibinfo{person}{Zihang Xiang}, \bibinfo{person}{Tianhao
  Wang}, {and} \bibinfo{person}{Di Wang}.} \bibinfo{year}{2024}\natexlab{}.
\newblock \showarticletitle{Preserving node-level privacy in graph neural
  networks}. In \bibinfo{booktitle}{\emph{IEEE Symposium on Security and
  Privacy}}. \bibinfo{pages}{4714--4732}.
\newblock


\bibitem[Xiao et~al\mbox{.}(2014)]%
        {xiao2014differentially}
\bibfield{author}{\bibinfo{person}{Qian Xiao}, \bibinfo{person}{Rui Chen},
  {and} \bibinfo{person}{Kian-Lee Tan}.} \bibinfo{year}{2014}\natexlab{}.
\newblock \showarticletitle{Differentially private network data release via
  structural inference}. In \bibinfo{booktitle}{\emph{ACM SIGKDD Conference on
  Knowledge Discovery and Data Mining}}. \bibinfo{pages}{911--920}.
\newblock


\bibitem[Yang et~al\mbox{.}(2021)]%
        {yang2020secure}
\bibfield{author}{\bibinfo{person}{Carl Yang}, \bibinfo{person}{Haonan Wang},
  \bibinfo{person}{Ke Zhang}, \bibinfo{person}{Liang Chen}, {and}
  \bibinfo{person}{Lichao Sun}.} \bibinfo{year}{2021}\natexlab{}.
\newblock \showarticletitle{Secure deep graph generation with link differential
  privacy}. In \bibinfo{booktitle}{\emph{International Joint Conference on
  Artificial Intelligence}}. \bibinfo{pages}{3271--3278}.
\newblock


\bibitem[Ye et~al\mbox{.}(2023)]%
        {ye2023stateful}
\bibfield{author}{\bibinfo{person}{Qingqing Ye}, \bibinfo{person}{Haibo Hu},
  \bibinfo{person}{Kai Huang}, \bibinfo{person}{Man~Ho Au}, {and}
  \bibinfo{person}{Qiao Xue}.} \bibinfo{year}{2023}\natexlab{}.
\newblock \showarticletitle{{S}tateful {S}witch: Optimized time series release
  with local differential privacy}. In \bibinfo{booktitle}{\emph{IEEE INFOCOM
  Conference on Computer Communications}}. \bibinfo{pages}{1--10}.
\newblock


\bibitem[Ye et~al\mbox{.}(2021a)]%
        {ye2021beyond}
\bibfield{author}{\bibinfo{person}{Qingqing Ye}, \bibinfo{person}{Haibo Hu},
  \bibinfo{person}{Ninghui Li}, \bibinfo{person}{Xiaofeng Meng},
  \bibinfo{person}{Huadi Zheng}, {and} \bibinfo{person}{Haotian Yan}.}
  \bibinfo{year}{2021}\natexlab{a}.
\newblock \showarticletitle{Beyond {V}alue {P}erturbation: Local differential
  privacy in the temporal setting}. In \bibinfo{booktitle}{\emph{IEEE INFOCOM
  Conference on Computer Communications}}. \bibinfo{pages}{1--10}.
\newblock


\bibitem[Ye et~al\mbox{.}(2019)]%
        {ye2019privkv}
\bibfield{author}{\bibinfo{person}{Qingqing Ye}, \bibinfo{person}{Haibo Hu},
  \bibinfo{person}{Xiaofeng Meng}, {and} \bibinfo{person}{Huadi Zheng}.}
  \bibinfo{year}{2019}\natexlab{}.
\newblock \showarticletitle{PrivKV: Key-value data collection with local
  differential privacy}. In \bibinfo{booktitle}{\emph{IEEE Symposium on
  Security and Privacy}}. \bibinfo{pages}{317--331}.
\newblock


\bibitem[Ye et~al\mbox{.}(2021b)]%
        {ye2021privkvm}
\bibfield{author}{\bibinfo{person}{Qingqing Ye}, \bibinfo{person}{Haibo Hu},
  \bibinfo{person}{Xiaofeng Meng}, \bibinfo{person}{Huadi Zheng},
  \bibinfo{person}{Kai Huang}, \bibinfo{person}{Chengfang Fang}, {and}
  \bibinfo{person}{Jie Shi}.} \bibinfo{year}{2021}\natexlab{b}.
\newblock \showarticletitle{PrivKVM*: Revisiting key-value statistics
  estimation with local differential privacy}.
\newblock \bibinfo{journal}{\emph{IEEE Transactions on Dependable and Secure
  Computing}} \bibinfo{volume}{20}, \bibinfo{number}{1} (\bibinfo{year}{2021}),
  \bibinfo{pages}{17--35}.
\newblock


\bibitem[You et~al\mbox{.}(2018)]%
        {you2018graphrnn}
\bibfield{author}{\bibinfo{person}{Jiaxuan You}, \bibinfo{person}{Rex Ying},
  \bibinfo{person}{Xiang Ren}, \bibinfo{person}{William Hamilton}, {and}
  \bibinfo{person}{Jure Leskovec}.} \bibinfo{year}{2018}\natexlab{}.
\newblock \showarticletitle{GraphRNN: Generating realistic graphs with deep
  auto-regressive models}. In \bibinfo{booktitle}{\emph{International
  Conference on Machine Learning}}. \bibinfo{pages}{5708--5717}.
\newblock


\bibitem[Yuan et~al\mbox{.}(2023)]%
        {yuan2023privgraph}
\bibfield{author}{\bibinfo{person}{Quan Yuan}, \bibinfo{person}{Zhikun Zhang},
  \bibinfo{person}{Linkang Du}, \bibinfo{person}{Min Chen},
  \bibinfo{person}{Peng Cheng}, {and} \bibinfo{person}{Mingyang Sun}.}
  \bibinfo{year}{2023}\natexlab{}.
\newblock \showarticletitle{PrivGraph: Differentially private graph data
  publication by exploiting community information}. In
  \bibinfo{booktitle}{\emph{USENIX Security Symposium}}.
  \bibinfo{pages}{3241--3258}.
\newblock


\bibitem[Zhang and Chen(2018)]%
        {zhang2018link}
\bibfield{author}{\bibinfo{person}{Muhan Zhang} {and} \bibinfo{person}{Yixin
  Chen}.} \bibinfo{year}{2018}\natexlab{}.
\newblock \showarticletitle{Link prediction based on graph neural networks}. In
  \bibinfo{booktitle}{\emph{International Conference Neural Information
  Processing Systems}}. \bibinfo{pages}{5165--5175}.
\newblock


\bibitem[Zhang et~al\mbox{.}(2024)]%
        {zhang2024dpar}
\bibfield{author}{\bibinfo{person}{Qiuchen Zhang}, \bibinfo{person}{Hong~kyu
  Lee}, \bibinfo{person}{Jing Ma}, \bibinfo{person}{Jian Lou},
  \bibinfo{person}{Carl Yang}, {and} \bibinfo{person}{Li Xiong}.}
  \bibinfo{year}{2024}\natexlab{}.
\newblock \showarticletitle{DPAR: Decoupled graph neural networks with
  node-level differential privacy}. In \bibinfo{booktitle}{\emph{Proceedings of
  the ACM on Web Conference}}. \bibinfo{pages}{1170--1181}.
\newblock


\bibitem[Zhang et~al\mbox{.}(2023)]%
        {Zhang2023trajectory}
\bibfield{author}{\bibinfo{person}{Yuemin Zhang}, \bibinfo{person}{Qingqing
  Ye}, \bibinfo{person}{Rui Chen}, \bibinfo{person}{Haibo Hu}, {and}
  \bibinfo{person}{Qilong Han}.} \bibinfo{year}{2023}\natexlab{}.
\newblock \showarticletitle{Trajectory data collection with local differential
  privacy}. In \bibinfo{booktitle}{\emph{Proceedings of the VLDB Endowment}}.
  \bibinfo{pages}{2591--2604}.
\newblock


\end{thebibliography}
\begin{appendices}
\section{Graph Synthesis}\label{appendix:gra_syn}
In what follows, we describe the specific details of generating graphs as discussed in previous works~\cite{bojchevski2018netgan, rendsburg2020netgan}. Once training is complete, we employ the node embeddings to create a score matrix $\mathbf{S}$ that records transition counts. Since we intend to analyze this synthetic graph, we convert the raw counts matrix $\mathbf{S}$ into a binary adjacency matrix. Initially, $\mathbf{S}$ is symmetrized by setting $s_{ij} = s_{ji} = \max(s_{ij}, s_{ji})$. However, since we have no explicit control over the starting node of the random walks generated by $G$, high-degree nodes are likely to be overrepresented. Consequently, a simple strategy such as thresholding or selecting the top-$k$ entries for binarization may exclude low-degree nodes and create isolated nodes. To address this concern, we ensure that every node $i$ has at least one edge by sampling a neighbor $j$ with a probability of $p_{ij} = \frac{s_{ij}}{\sum_v s_{iv}}$. If an edge has already been sampled, we repeat this procedure. To ensure the graph is undirected, we include $(j,i)$ for every edge $(i,j)$. We continue sampling edges without replacement, using the probability $p_{ij} = \frac{s_{ij}}{\sum_{u, v} s_{uv}}$ for each edge $(i,j)$, until we reach the desired number of edges (e.g., determined by applying the Sigmoid function with the threshold value 0.5 to the score matrix). Additionally, to ensure the graph is undirected, we include $(j,i)$ for every edge $(i,j)$.   

\section{Proof of Lemma~\ref{lemma_uppbound_objfunc}}\label{appendix:lemma_proof}
\begin{proof}
Let $P_j$ denote the set of direct predecessors of node $j$, and $d_i^{out}$ represent the out-degree of node $i$. According to Eq.~(\ref{DPRObjFunc}), the objective function can be expressed as follows:
 \begin{equation*}
 \begin{split}
       \min_{\boldsymbol{\Theta}} \mathscr{L}
     =& \sum_{j \in V}\gamma^2\left(\sum_{i \in P_j} \frac{f(v_i; \boldsymbol{\Theta})}{d_i^{out}}+\frac{1-\gamma}{\gamma N} - \frac{f(v_j; \boldsymbol{\Theta})}{\gamma}\right)^2 \\
     \stackrel{(1)}=& \sum_{j \in V}\gamma^2\left(\sum_{i \in P_j} \frac{f(v_i; \boldsymbol{\Theta})}{d_i^{out}}- \frac{f(v_j; \boldsymbol{\Theta})}{\gamma}\right)^2 \\
     &+ \sum_{j \in V}2\gamma^2\left(\sum_{i \in P_j} \frac{f(v_i; \boldsymbol{\Theta})}{d_i^{out}}- \frac{f(v_j; \boldsymbol{\Theta})}{\gamma}\right)\frac{1-\gamma}{\gamma N}
    +\sum_{j \in V}\frac{(1-\gamma)^2}{N^2} \\
    \stackrel{(2)}\leq& \sum_{(i,j) \in E}d_j^{in}\gamma^2\left(\frac{f(v_i; \boldsymbol{\Theta})}{d_i^{out}}- \frac{f(v_j; \boldsymbol{\Theta})}{d_j^{in}\gamma}\right)^2 \\
    & +\sum_{(i,j) \in E}\left(\frac{f(v_i; \boldsymbol{\Theta})}{d_i^{out}}- \frac{f(v_j; \boldsymbol{\Theta})}{d_j^{in}\gamma}\right)\frac{2\gamma(1-\gamma)}{N}
    +\sum_{(i,j) \in E}\frac{(1-\gamma)^2}{d_j^{in}N^2},
 \end{split}
\end{equation*}
where the step (2) holds because the Cauchy-Schwarz inequality is applied to the first term in the step (1).
\end{proof}

\section{Proof of Lemma~\ref{lemma_NNObj_On_V}}\label{appendix:lemma2_proof}
\begin{proof}
The Lipschitz constant of a function and the norm of its gradient are two sides of the same coin. We define $\|f\|_\mathrm{Lip}$ as the smallest value $\rho$ such that $\|f(\mathbf{x})-f(\mathbf{x}^\prime)\| / \|\mathbf{x}-\mathbf{x}^\prime\| \leq \rho$ for any $\mathbf{x}, \mathbf{x}^\prime$, with the norm being the $\ell_2$-norm. We can use the inequality $\|f_1 \circ f_2\|_\mathrm{Lip}\leq\|f_1\|_\mathrm{Lip}\cdot\|f_2\|_\mathrm{Lip}$ to observe the following bound on $\|f\|_\mathrm{Lip}$:
\begin{equation*}\label{NeuralNetGra_On_V}
    \begin{split}
         \left\|f\right\|_\mathrm{Lip}        \leq&\left\|\phi_{L+1}\right\|_\mathrm{Lip}\cdot\left\|\mathbf{W}_{L+1}\mathbf{V}_L\right\|_\mathrm{Lip}\cdot\left\|\phi_L\right\|_\mathrm{Lip}\cdot\left\|\mathbf{W}_L\mathbf{V}_{L-1}\right\|_\mathrm{Lip} \\ &\cdots\left\|\phi_1\right\|_\mathrm{Lip}\cdot\left\|\mathbf{W}_1\mathbf{V}_0\right\|_\mathrm{Lip} 
       =\prod_{l=1}^{L+1}\left\|\phi_l\right\|_\mathrm{Lip} \cdot \left\|\mathbf{W}_l\mathbf{V}_{l-1}\right\|_\mathrm{Lip} \\
       \stackrel{(i)}\leq& \prod_{l=1}^{L+1} \left\|\mathbf{W}_l\mathbf{V}_{l-1}\right\|_\mathrm{Lip}
       \stackrel{(ii)}=\prod_{l=1}^{L+1}\max_{\mathbf{V} \neq 0}\frac{\left\|\mathbf{W}_l\mathbf{V}\right\|_2}{\left\|\mathbf{V}\right\|_2}=\prod_{l=1}^{L+1}\left\|\mathbf{W}_l\right\|_2,
    \end{split}
\end{equation*}
where the step (\emph{i}) holds since the \emph{Sigmoid} function is used, and the step (\emph{ii}) holds because weight normalization is implemented through spectral normalization. The \emph{Sigmoid} function can be substituted with other activation functions, such as \emph{ReLU} and \emph{leaky ReLU}, while the inequalities above still hold.

According to $\|f\|_\mathrm{Lip}$, it is possible to show that $\left\|\frac{\partial f(v; \boldsymbol{\Theta})}{\partial\mathbf{V}}\right\|_2 \leq \prod_{l=1}^{L+1} \left\|\mathbf{W}_l\right\|_2$.
This fact allows us to bound the weight $\mathbf{W}_l$, instead of the output of the gradient function, and obtain a bound on the $\ell_2$-sensitivity of the gradient.
\end{proof}

\section{Proof of Lemma~\ref{lemma_BatchObjFuncGra_On_V}}\label{appendix:lemma3_proof}
\begin{proof}
With Lemma~\ref{lemma_NNObj_On_V}, we have
\begin{equation*}\label{GraUpBound}
    \begin{split}
        &\left\|\frac{\partial{\mathcal{L}(v_i, v_j; \boldsymbol{\Theta})}}{\partial{\mathbf{V}}}\right\|_2
      =\left\|\left(\frac{2d_j^{in}\gamma^2f(v_i; \boldsymbol{\Theta})}{d_i^{out}}- 2\gamma f(v_j; \boldsymbol{\Theta}) + \frac{2\gamma(1-\gamma)}{N}\right) \right.\\
           &\left. \left(\frac{\partial f(v_i; \boldsymbol{\Theta})/\partial\mathbf{V}}{d_i^{out}}- \frac{\partial f(v_j; \boldsymbol{\Theta})/\partial\mathbf{V}}{d_j^{in}\gamma}\right)\right\|_2 \\
      \stackrel{(1)}
      \leq& \left\|\left(\frac{2d_j^{in}\gamma^2}{d_i^{out}}+ 2\gamma +
      \frac{2\gamma(1-\gamma)}{N}\right) \left(\frac{\partial f(v_i; \boldsymbol{\Theta})/\partial\mathbf{v}_i}{d_i^{out}}- \frac{\partial f(v_j; \boldsymbol{\Theta})/\partial\mathbf{v}_j}{d_j^{in}\gamma}\right)\right\|_2 \\
      \leq& \left(\frac{2d_j^{in}\gamma^2}{d_i^{out}}+ 2\gamma +
      \frac{2\gamma(1-\gamma)}{N}\right)
      \left(\frac{1}{d_i^{out}} \left\|\prod_{l=1}^{L+1} \mathbf{W}_l\right\|_2 + \frac{1}{d_j^{in}\gamma} \left\|\prod_{l=1}^{L+1} \mathbf{W}_l\right\|_2\right) \\
      \leq& \left(2(N-1)\gamma^2 + 2\gamma + \frac{2\gamma(1-\gamma)}{N}\right)
          \left(1 + \frac{1}{\gamma}\right)\prod_{l=1}^{L+1}\left\|\mathbf{W}_l\right\|_2,
    \end{split}
\end{equation*}
where the step (1) holds since the $f(v; \boldsymbol{\Theta})$ with \emph{Sigmoid} activation in Eq.~(\ref{NeuralNet_model}) is less than 1, and triangle inequality is used.
\end{proof}

\section{Graph Statistics}\label{appendix_exp:p2p_chicago}
The findings for link prediction across the Citeseer, p2p and Chicago datasets are displayed in Figures~\ref{PrivBud_on_p2p} and \ref{PrivBud_on_chicago}.

\begin{figure}[htb!]
  \includegraphics[width=3.5in]{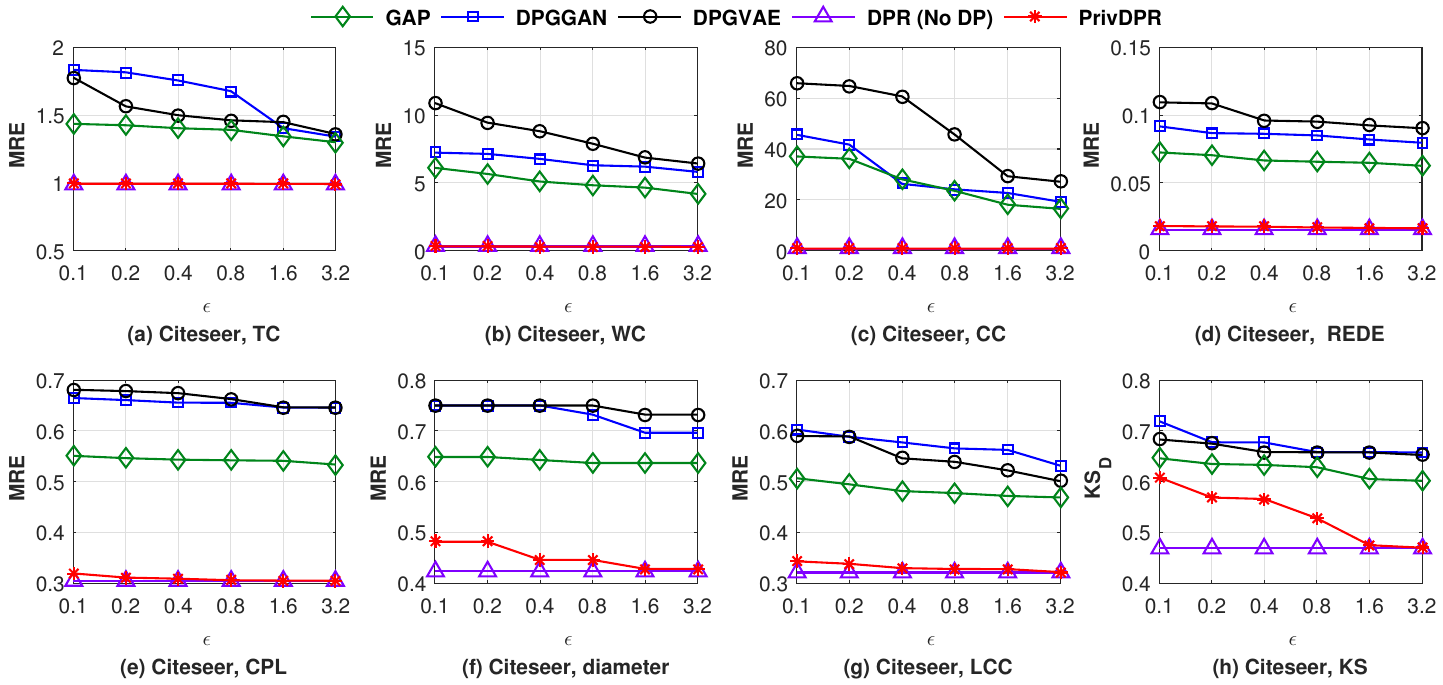}
  \caption{Privacy budget on Citeseer}\label{PrivBud_on_citeseer}
\end{figure}

\begin{figure}[htb!]
  \includegraphics[width=3.5in]{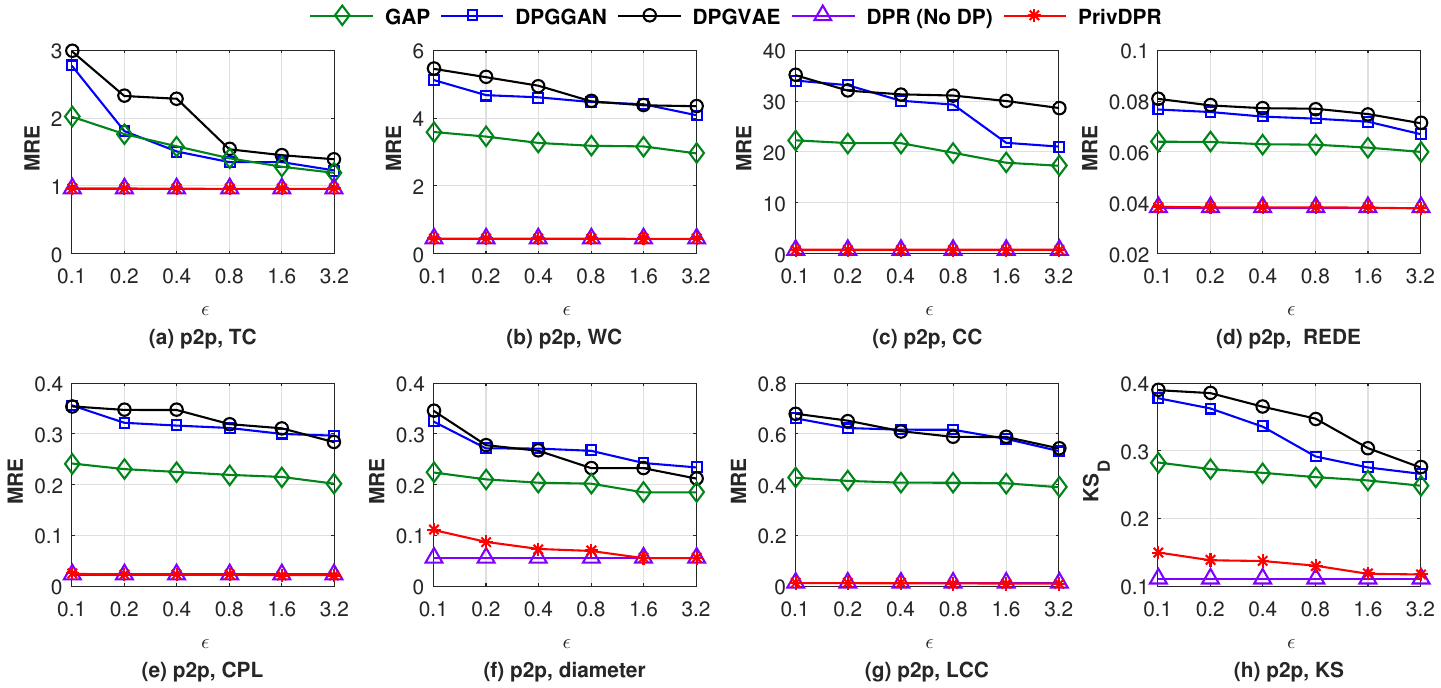}
  \caption{Privacy budget on p2p}\label{PrivBud_on_p2p}
\end{figure}

\begin{figure}[htb!]
  \includegraphics[width=3.5in]{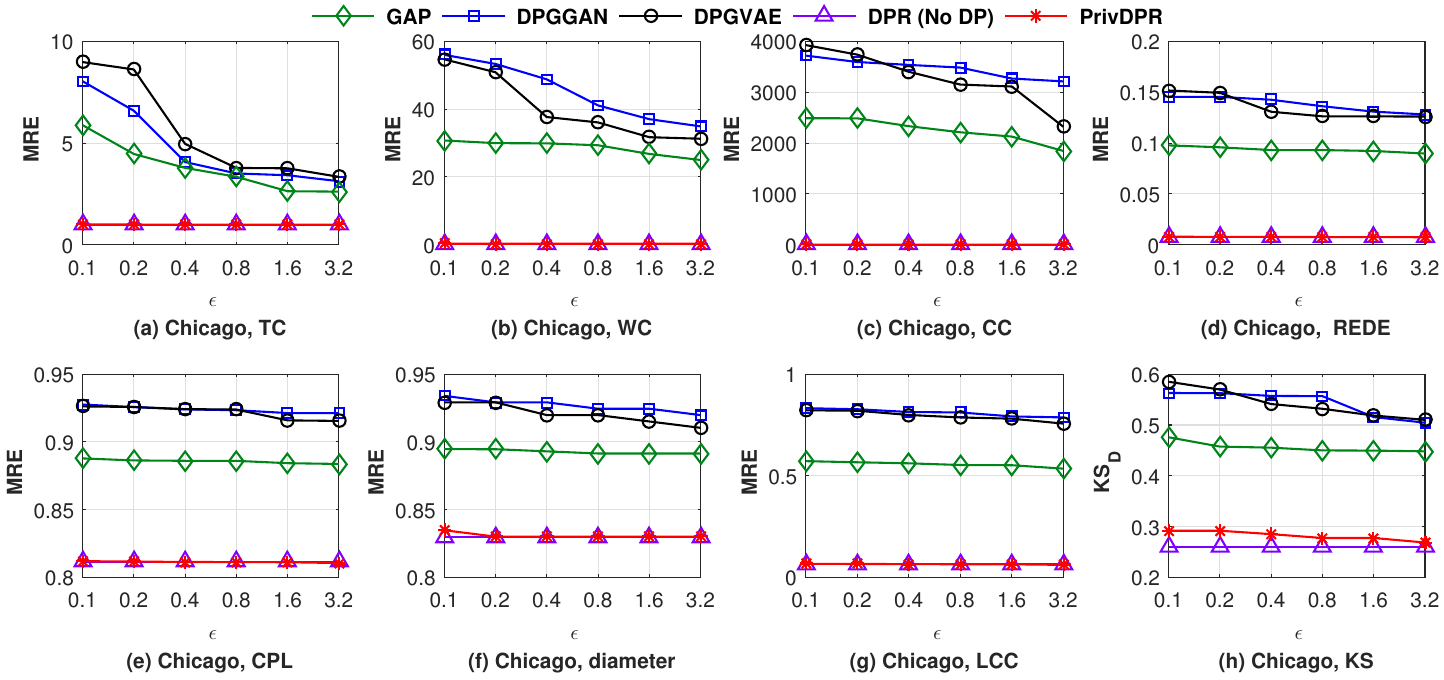}
  \caption{Privacy budget on Chicago}\label{PrivBud_on_chicago}
\end{figure}

\begin{table}[htb!]\small
\caption{AUC Scores for Link Prediction under $\epsilon=0.1$. \textbf{Bold}: best}\label{tab:auc_score1}
\begin{tabular}{c|c|ccccc}
\hline
\multirow{2}{*}{Algorithms} & \multirow{2}{*}{AUC} & \multicolumn{5}{c}{Datasets}                                                                                                                                                \\ \cline{3-7} 
                            &                      & \multicolumn{1}{c|}{Cora}            & \multicolumn{1}{c|}{Citeseer}        & \multicolumn{1}{c|}{p2p}             & \multicolumn{1}{c|}{Chicago}         & Amazon          \\ \hline
\multirow{2}{*}{GAP}        & Mean                 & \multicolumn{1}{c|}{{0.4873}}          & \multicolumn{1}{c|}{{0.4781}}          & \multicolumn{1}{c|}{{0.4859}}          & \multicolumn{1}{c|}{{0.4915}}          & {{0.4792}}          \\ \cline{2-7} 
                            & SD                   & \multicolumn{1}{c|}{{0.0371}}          & \multicolumn{1}{c|}{{0.0213}}          & \multicolumn{1}{c|}{{0.0092}}          & \multicolumn{1}{c|}{{0.0093}} & {{0.0127}}          \\ \hline
\multirow{2}{*}{DPGGAN}     & Mean                 & \multicolumn{1}{c|}{0.4914}          & \multicolumn{1}{c|}{0.4926}          & \multicolumn{1}{c|}{0.5005}          & \multicolumn{1}{c|}{0.4980}          & {0.4978}          \\ \cline{2-7} 
                            & SD                   & \multicolumn{1}{c|}{0.0157}          & \multicolumn{1}{c|}{0.0108} & \multicolumn{1}{c|}{0.0115}          & \multicolumn{1}{c|}{0.0091}          & {0.0058}          \\ \hline
\multirow{2}{*}{DPGVAE}     & Mean                 & \multicolumn{1}{c|}{0.4893}          & \multicolumn{1}{c|}{0.4977}          & \multicolumn{1}{c|}{0.4982}          & \multicolumn{1}{c|}{0.5014}          & {0.4813}          \\ \cline{2-7} 
                            & SD                   & \multicolumn{1}{c|}{0.0083}          & \multicolumn{1}{c|}{0.0156}          & \multicolumn{1}{c|}{0.0091}          & \multicolumn{1}{c|}{0.0064}          & {0.0063}          \\ \hline
\multirow{2}{*}{PrivDPR}    & Mean                 & \multicolumn{1}{c|}{\textbf{0.5066}} & \multicolumn{1}{c|}{\textbf{0.5099}} & \multicolumn{1}{c|}{\textbf{0.5084}} & \multicolumn{1}{c|}{\textbf{0.5020}} & {\textbf{0.5009}} \\ \cline{2-7} 
                            & SD                   & \multicolumn{1}{c|}{\textbf{0.0070}} & \multicolumn{1}{c|}{{\textbf{0.0100}}}          & \multicolumn{1}{c|}{\textbf{0.0052}} & \multicolumn{1}{c|}{\textbf{0.0054}}          & {\textbf{0.0025}} \\ \hline
\end{tabular}
\end{table}

\begin{table}[htb!]\small
\caption{AUC Scores for Link Prediction under $\epsilon=3.2$. \textbf{Bold}: best}\label{tab:auc_score}
\begin{tabular}{c|c|ccccc}
\hline
\multirow{2}{*}{Algorithms} & \multirow{2}{*}{AUC} & \multicolumn{5}{c}{Datasets}                                                                                                                                                \\ \cline{3-7} 
                            &                      & \multicolumn{1}{c|}{Cora}            & \multicolumn{1}{c|}{Citeseer}        & \multicolumn{1}{c|}{p2p}             & \multicolumn{1}{c|}{Chicago}         & Amazon          \\ \hline
\multirow{2}{*}{GAP}        & Mean                 & \multicolumn{1}{c|}{{0.4978}}          & \multicolumn{1}{c|}{{0.4897}}          & \multicolumn{1}{c|}{{0.4973}}          & \multicolumn{1}{c|}{{0.5001}}          & {0.4913}          \\ \cline{2-7} 
                            & SD                   & \multicolumn{1}{c|}{{0.0138}}          & \multicolumn{1}{c|}{{0.0173}}          & \multicolumn{1}{c|}{{0.0214}}          & \multicolumn{1}{c|}{{\textbf{0.0078}}} & {0.0071}          \\ \hline
\multirow{2}{*}{DPGGAN}     & Mean                 & \multicolumn{1}{c|}{0.5189}          & \multicolumn{1}{c|}{0.5027}          & \multicolumn{1}{c|}{0.5079}          & \multicolumn{1}{c|}{0.5031}          & 0.5041          \\ \cline{2-7} 
                            & SD                   & \multicolumn{1}{c|}{0.0136}          & \multicolumn{1}{c|}{\textbf{0.0102}} & \multicolumn{1}{c|}{0.0060}          & \multicolumn{1}{c|}{0.0112}          & 0.0040          \\ \hline
\multirow{2}{*}{DPGVAE}     & Mean                 & \multicolumn{1}{c|}{0.5068}          & \multicolumn{1}{c|}{0.5139}          & \multicolumn{1}{c|}{0.5045}          & \multicolumn{1}{c|}{0.5139}          & 0.5012          \\ \cline{2-7} 
                            & SD                   & \multicolumn{1}{c|}{0.0063}          & \multicolumn{1}{c|}{0.0152}          & \multicolumn{1}{c|}{\textbf{0.0035}}          & \multicolumn{1}{c|}{0.0152}          & 0.0017          \\ \hline
\multirow{2}{*}{PrivDPR}    & Mean                 & \multicolumn{1}{c|}{\textbf{0.5535}} & \multicolumn{1}{c|}{\textbf{0.5658}} & \multicolumn{1}{c|}{\textbf{0.5986}} & \multicolumn{1}{c|}{\textbf{0.5738}} & \textbf{0.5917} \\ \cline{2-7} 
                            & SD                   & \multicolumn{1}{c|}{\textbf{0.0057}} & \multicolumn{1}{c|}{0.0110}          & \multicolumn{1}{c|}{0.0052} & \multicolumn{1}{c|}{0.0094}          & \textbf{0.0013} \\ \hline
\end{tabular}
\end{table}

\begin{table}[htb!]\small
\caption{Micro-F1 Scores for Node Classification under $\epsilon=0.1$ and $\epsilon=3.2$. \textbf{Bold}: best}
\label{tab:micro_f1}
\begin{tabular}{c|c|cc|cc}
\hline
\multirow{2}{*}{Algorithms} & \multirow{2}{*}{Micro-F1} & \multicolumn{2}{c|}{$\epsilon=0.1$}    & \multicolumn{2}{c}{$\epsilon=3.2$}     \\ \cline{3-6} 
                            &                      & \multicolumn{1}{c|}{Cora}   & Citeseer & \multicolumn{1}{c|}{Cora}   & Citeseer \\ \hline
\multirow{2}{*}{GAP}        & Mean                 & \multicolumn{1}{c|}{0.3097} & 0.2094   & \multicolumn{1}{c|}{0.3151} & 0.2238   \\ \cline{2-6} 
                            & SD                   & \multicolumn{1}{c|}{0.0250} & \textbf{0.0010}   & \multicolumn{1}{c|}{\textbf{0.0011}} & 0.0150   \\ \hline
\multirow{2}{*}{DPGGAN}     & Mean                 & \multicolumn{1}{c|}{0.2690} & 0.1855   & \multicolumn{1}{c|}{0.2849} & 0.2042   \\ \cline{2-6} 
                            & SD                   & \multicolumn{1}{c|}{0.0150} & 0.0161   & \multicolumn{1}{c|}{0.0096} & 0.0080   \\ \hline
\multirow{2}{*}{DPGVAE}     & Mean                 & \multicolumn{1}{c|}{0.2915} & 0.2075   & \multicolumn{1}{c|}{0.3026} & 0.2172   \\ \cline{2-6} 
                            & SD                   & \multicolumn{1}{c|}{0.0088} & 0.0149   & \multicolumn{1}{c|}{0.0148} & \textbf{0.0030}   \\ \hline
\multirow{2}{*}{PrivDPR}    & Mean                 & \multicolumn{1}{c|}{\textbf{0.3185}} & \textbf{0.2242}   & \multicolumn{1}{c|}{\textbf{0.3188}} & \textbf{0.2266}   \\ \cline{2-6} 
                            & SD                   & \multicolumn{1}{c|}{\textbf{0.0087}} & 0.0144   & \multicolumn{1}{c|}{0.0170} & 0.0099   \\ \hline
\end{tabular}
\end{table}

\section{Link Prediction}\label{appendix_exp:link_pred}
A higher value of AUC indicates better utility.
From Tables~\ref{tab:auc_score1} and \ref{tab:auc_score}, PrivDPR consistently shows the highest AUC scores across all datasets. This indicates that it is the most effective algorithm for link prediction while maintaining node-level privacy. 
DPGGAN and DPGVAE, which both offer edge-level privacy, exhibit similar performance levels with $\epsilon=0.1$ and $\epsilon=3.2$. 
GAP has the lowest AUC scores compared to the other algorithms, indicating that it is the least effective for link prediction at both $\epsilon=0.1$ and $\epsilon=3.2$.
In the context of SD, a smaller value reflects greater stability in the algorithm. PrivDPR exhibits the lowest SD across all datasets under $\epsilon=0.1$, and also demonstrates the lowest SD for the Cora and Amazon datasets under $\epsilon=3.2$. This suggests that PrivDPR possesses strong stability. 
DPGVAE and DPGGAN exhibit relatively low SD in most datasets, signifying stable performance. Although GAP exhibits significant stability on the Chicago dataset under $\epsilon=3.2$, its overall performance is comparatively lower.

\section{Node Classification}\label{appendix_exp:node_class}
A higher Micro-F1 score indicates better utility. PrivDPR achieves the highest average Micro-F1 scores on the Cora and Citeseer datasets under both $\epsilon=0.1$ and $\epsilon=3.2$. This indicates PrivDPR is an effective algorithm for privacy-preserving node classification. DPGVAE follows with slightly poor Micro-F1 scores but shows stable performance. GAP scores slightly lower than PrivDPR and DPGVAE, and yet performs well especially with larger privacy budgets ($\epsilon=3.2$). DPGGAN obtains the lowest Micro-F1 scores, particularly struggling with smaller privacy budgets ($\epsilon=0.1$). In terms of SD, under $\epsilon=0.1$, it is observed that PrivDPR achieves the most stable result on Cora, while GAP achieves the most stable result on Citeseer. Additionally, both DPGVAE and PrivDPR demonstrate relatively stable performance across both datasets. Under $\epsilon=3.2$, GAP achieves the most stable result on Cora, while DPGVAE achieves the most stable result on Citeseer. Nevertheless, DPGGAN displays relatively stable performance across both Cora and Citeseer datasets.
\end{appendices}

\end{document}